\documentclass[12pt,a4paper]{article}
\pdfoutput=1
\usepackage[utf8]{inputenc}
\usepackage{fullpage}
\usepackage{cite}
\usepackage{caption}
\usepackage{subcaption}
\usepackage[english]{babel}
\usepackage{amsmath}
\usepackage{physics}
\usepackage{amsthm}
\usepackage{amsfonts}
\usepackage{algorithm}
\usepackage[noend]{algpseudocode}
\usepackage{amssymb}
\usepackage{mathrsfs}
\usepackage{mathtools}
\usepackage[toc,page]{appendix}
\usepackage{graphicx}
\usepackage{float}
\usepackage{commath}
\usepackage{enumitem}
\usepackage[affil-it]{authblk}
\usepackage{siunitx}
\usepackage{braket}
\usepackage{multicol}
\usepackage{tabularx}
\usepackage{layout}
\usepackage{xcolor}

\usepackage{jheppub}
\usepackage{tikz}
\usetikzlibrary{decorations.markings, arrows.meta, calc, shapes.geometric, arrows, shapes, positioning, fit}
\usepackage{graphicx}
\usepackage{quantikz}
\pgfdeclarelayer{background}
\pgfdeclarelayer{foreground}
\pgfsetlayers{background,main,foreground}
\usepackage{environ}
\NewEnviron{myequation}{%
\begin{equation*}
\scalebox{1.7}{$\BODY$}
\end{equation*}
}

\newtheorem{proposition}{Proposition}

\def\ket#1{{|{#1}\rangle}}

\title{Magic and Non-Clifford Gates in Topological Quantum Field Theory}

\author[a]{William Munizzi,}
\author[b]{Howard J. Schnitzer}

\affiliation[a]{Division of Physical Sciences, College of Letters and Science, University of California, Los Angeles, Los Angeles, CA, 90095, USA}
\affiliation[b]{Martin Fisher School of Physics, Brandeis University, Waltham, MA 02453, USA}

\emailAdd{wmunizzi17@ucla.edu}
\emailAdd{schnitzr@brandeis.edu}

\abstract{Non-Clifford gates, used to generate quantum magic, are essential for universal quantum computation. We show that non-Clifford gates arise naturally from path integrals in topological quantum field theories, where their magic-generating properties are determined by the algebraic data of the theory. In Chern-Simons theory, we construct the Ising interaction gate, whose generator is prepared by path integration over simple three-boundary manifolds, and show that it produces non-local magic away from discrete Clifford points. We show that the Toffoli gate is obstructed in $SU(2)_1$ by the $\mathbb{Z}_2$ fusion structure, while $SU(2)_3$ is the minimal theory supporting the required conditional logic, given the density of the mapping class group in the projective unitary group on the manifold boundary. Finally, we demonstrate that the T gate arises as a path integral in Dijkgraaf-Witten theory, with gauge group $\mathbb{Z}_4$, where a single Dehn twist on the boundary torus produces the gate without approximation. These results show that topological path integrals construct gates in multiple levels of the Clifford hierarchy, and across distinct classes of field theories, with implications for topological quantum computing.}

\begin{document}
\maketitle
\flushbottom

\section{Introduction}

Quantum computation beyond the efficiently simulable Clifford group requires magic, a resource that separates universal quantum computation from classical computing~\cite{Gottesman1997,Veitch2013,Bravyi2004,Howard2014,Bravyi2016,Campbell2012,Campbell2017}. The characterization of magic, and its interplay with additional resources such as entanglement, has become a central problem in quantum resource theory with implications for fault-tolerant quantum computing, many-body physics, and holography~\cite{Leone:2021rzd,Cao:2024nrx,tirrito:2024,haug:2025,faidon:2026,Bao:2022mkc}. At the same time, the entanglement structure of stabilizer states~\cite{garcia2017geometry,Linden:2013kal,Bao2020,Keeler:2022ajf,Keeler:2023xcx,Varikuti:2025poy,Nezami:2016zni,Fuentes:2025kwg} and Clifford operators~\cite{Keeler:2023shl,Munizzi:2023ihc,Munizzi:2024huw,CuiGottesmanKrishna2017,Farinholt_2014,Selinger2013,Jagannathan:2010sb} has been studied extensively, and topological quantum field theory has proven especially well-suited to this program, with recent work characterizing the multipartite entanglement of states prepared by path integration~\cite{Salton:2016qpp,Bao2022a,Balasubramanian:2025kaf,Cummings:2025zfe,Munizzi:2025suf}. However, the topological origin of magic itself, which fundamentally lies beyond the Clifford framework, remains largely unexplored. Understanding how magic originates from the geometric and algebraic data of topological field theories would address foundational questions about the physics of quantum computational power, and provide new tools for constructing and characterizing magic states and gates.

Topological quantum field theories provide a natural framework for preparing quantum states with controlled entanglement configurations. In $U(1)$ Chern-Simons theory, at suitable levels, states prepared by path integration over three-manifolds coincide exactly with the stabilizer states~\cite{Salton:2016qpp}, which can possess maximal entanglement, but zero magic. For non-abelian theories, e.g. $SO(3)$ Chern-Simons, state preparation becomes universal and any state can be approximated to arbitrary precision. The construction of Pauli and Clifford operators as path integrals in $SU(2)_1$ Chern-Simons theory~\cite{Schnitzer:2020tiv,Munizzi:2025suf} demonstrates that fusion tensors and modular transformations directly realize Clifford group elements, providing a topological interpretation of Clifford dynamics. The magic content of certain topologically prepared states has also been quantified~\cite{Fliss:2021}, demonstrating that knot and link states in $SU(2)_k$ are generically magical and the magic of link states is predominantly long-range in nature.

Our work extends this program of realizing topological quantum operations from Clifford gates to magic gates. We construct non-Clifford unitaries as path integrals in topological quantum field theories, characterize their resource-theoretic properties, and identify the topological obstructions that constrain which gates are accessible at each level of the theory.

Our first result focuses on the Ising interaction gate $U(\theta) = \exp(-i\theta/2 (X_1 \otimes X_2))$ in $SU(2)_1$ Chern-Simons theory, a gate capable of producing non-local magic. The generator $X_1 \otimes X_2$ is prepared by path integration over a disjoint union of handlebodies, relying on the factorization property of path integrals over disconnected manifolds. We establish that $U(\theta)$ generates non-local magic for all $\theta$ away from the select Clifford points $\theta \in \{0, \pi/2, \pi\}$, and compute its non-stabilizing power $m_p(\theta) = 1/2(\sin^2(2\theta))$. The non-local character of magic generated is proven using the operator entanglement criterion~\cite{faidon:2026}, i.e. showing that conjugation by $U(\theta)$ mixes the local Pauli $Z_1 \otimes \mathbb{I}_2$ into a genuinely bipartite operator with operator linear entropy $E_{\mathrm{lin}}(U^\dagger P U) = 1/2(\sin^2(2\theta)) > 0$.

Our second result identifies a fundamental obstruction to realizing the three-qubit Toffoli gate in $SU(2)_1$. The $\mathbb{Z}_2$ fusion structure can only distinguish configurations by their parity, and therefore cannot implement the AND condition required by the Toffoli gate. We show that $SU(2)_3$ is the minimal level at which this obstruction is resolved, due to the  branching fusion rule $1/2 \otimes 1/2 = 0 \oplus 1$, which produces a spin-$1$ intermediate channel accessible only when both control qubits carry spin-$1/2$. Existence of the Toffoli gate at this level is established using the density of the mapping class group in the projective unitary group on the boundary~\cite{Freedman2003}. We describe the required properties of a manifold that can realize the Toffoli gate construction by path integration. The requisite surgery presentation and verification of leakage cancellation are described, and are left as open problems.

Our third result moves beyond Chern-Simons theory to Dijkgraaf-Witten theory~\cite{Dijkgraaf:1990} with finite gauge group $\mathbb{Z}_4$ and generating 3-cocycle $\omega_1 \in H^3(\mathbb{Z}_4, U(1))$. The modular $T$-matrix of this theory, restricted to a two-dimensional logical subspace, produces the T gate exactly without approximation. This construction reveals a contrast with the Chern-Simons case, that while both theories implement the modular $T$ operation through a Dehn twist on the boundary torus, the Chern-Simons modular $T$ produces a Clifford gate at the second level of the Clifford hierarchy, whereas the Dijkgraaf-Witten modular $T$ produces a non-Clifford gate at the third level, with the difference controlled by the cocycle data.

The results of this work establish topological quantum field theory as a setting for magic resource theory, complementing existing work on magic in Chern-Simons link states~\cite{Fliss:2021} and extending the stabilizer program~\cite{Salton:2016qpp,Cummings:2025zfe,Balasubramanian:2025kaf,Munizzi:2025suf} to the non-stabilizer regime. Our constructions provide explicit path-integral realizations of magic gates, connect their resource-theoretic properties to the algebraic data of the underlying topological quantum field theory, and identify level-dependent obstructions that constrain the magic content accessible at each level.

The paper is organized as follows. Section~\ref{Background} reviews the relevant background on topological state preparation, stabilizer states, quantum magic, and group cohomology. Section~\ref{sec:magic} constructs the Ising interaction gate in $SU(2)_1$, establishes its non-local magic content, and computes its non-stabilizing power. Section~\ref{sec:su2k} identifies the obstruction to the Toffoli gate in $SU(2)_1$, describes its realization in $SU(2)_3$, and states the associated open problems. Section~\ref{sec:DW} constructs the T gate in Dijkgraaf-Witten theory and compares the construction with the Chern-Simons results. We conclude with a discussion of implications and future directions in Section~\ref{sec:discussion}.

\section{Review}\label{Background}

In this section we provide a short review of relevant material regarding topological state preparation, stabilizer states, quantum magic, and group cohomology. Readers familiar with these topics may skip to Section~\ref{sec:magic}.

\subsection{Topological State Preparation via Path Integrals}\label{TopologicalStatePrep}

A powerful result in Chern-Simons theory surrounds the expression of the Euclidean path integral as a map from manifolds to quantum states~\cite{Witten1989,Salton:2016qpp,Dong2008,Wang2010}. For a three-manifold $\mathcal{M}$ whose boundary $\partial \mathcal{M}$ consists of $n$ torus components $\Sigma_i = \mathbb{T}^2$, path integration on $\mathcal{M}$ prepares a multipartite quantum state $\ket{\mathcal{M}} \in \mathcal{H}_{\mathbb{T}^2}^{\otimes n}$. The Hilbert space $\mathcal{H}_{\mathbb{T}^2}$ is associated to a single torus boundary, and is spanned by a finite set of basis states $\{\ket{j}\}_{j=0}^{d-1}$, one for each irreducible representation $R_j$ allowed at level $k$, where $d = \dim\mathcal{H}_{\mathbb{T}^2}$ denotes the number of integrable representations (for $U(1)$ theory, $d = k$). The basis states themselves are prepared by path integration over a solid torus $\mathcal{T}$, with a Wilson loop operator
\begin{equation}\label{WilsonLoop}
W(C, R_j) = \mathrm{Tr}_{R_j}\left[\mathcal{P} \exp\left(i \oint_C A \right)\right],
\end{equation}
inserted along a curve $C$ passing through its non-contractible core, as shown in Figure~\ref{SingleTorus}. In Eq.\ \eqref{WilsonLoop}, $A$ is a connection $1$-form serving the role of the gauge field, $\mathrm{Tr}_{R_j}$ denotes the trace taken in the representation $R_j$, and $\mathcal{P}$ provides a path ordering along $C$.
\begin{figure}[h]
\centering
\includegraphics[width=6cm]{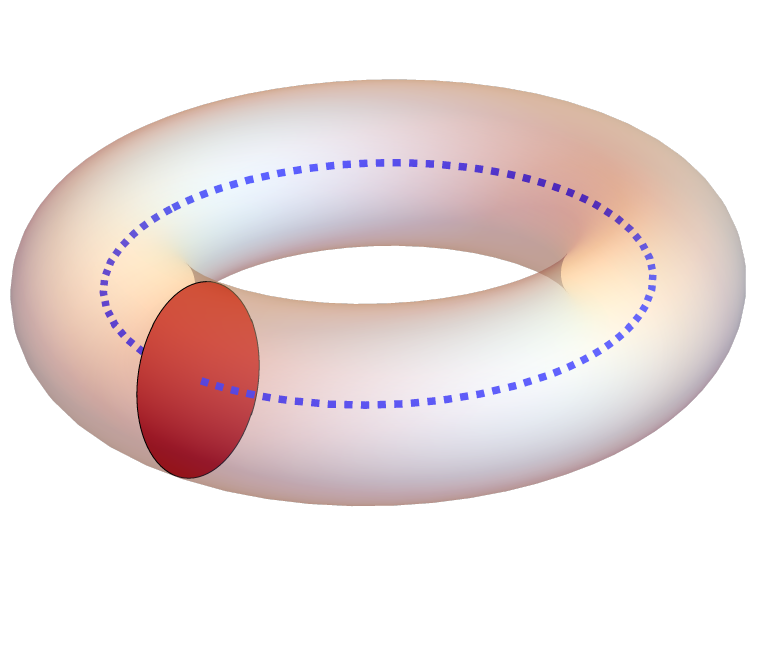}
\caption{Solid torus with a Wilson loop along a curve $C$ through its non-contractible core. Path integration along $C$ prepares the state $\ket{j}$, corresponding to the irreducible representation $R_j$ of the gauge group. The states $\{\ket{j}\}$ form an orthonormal basis for $\mathcal{H}_{\mathbb{T}^2}$.}
\label{SingleTorus}
\end{figure}

For Wilson loop operators with support on a link $L \equiv \bigcup_{i=1}^n C_i$, the corresponding observable is
\begin{equation}\label{WilsonLoopUnion}
W(L,R) = W(C_1,R_1)\cdots W(C_n,R_n).
\end{equation}
When the link $L$ may be thickened to solid torus, the expectation value of $W(L,R)$ becomes
\begin{equation}\label{WilsonLoopFraction}
\langle W(L,R) \rangle_{\mathcal{M}} = \frac{Z(\mathcal{M};L,R)}{Z(\mathcal{M})},
\end{equation}
where $Z(\mathcal{M})$ denotes the partition function on $\mathcal{M}$
\begin{equation}\label{PartitionFunction}
Z(\mathcal{M}) \equiv \int \mathcal{D}A\, e^{iS_{CS}},
\end{equation}
with $S_{CS}$ the Chern-Simons action. The function $Z(\mathcal{M};L,R)$, given by
\begin{equation}\label{PartitionFunctionWilson}
Z(\mathcal{M};L,R) = \int \mathcal{D}A\, e^{iS_{CS}} W(L,R),
\end{equation}
is the partition function after Wilson loop insertions.

Let $\mathcal{M}$ be a manifold with $n$ torus boundaries. The closed manifold $\overline{\mathcal{M}}$ is obtained by gluing $n$ solid tori, each with core $C_i$ carrying a Wilson loop in the dual representation $R_{j_i}^*$. Path integration over $\overline{\mathcal{M}}$ gives the amplitudes~\cite{Salton:2016qpp} of the prepared state $\ket{\mathcal{M}}$ as
\begin{equation}\label{StateAmplitudes}
\bra{j_1, \ldots, j_n}\ket{\mathcal{M}} = \langle W(C_1, R_{j_1}^*) \cdots W(C_n, R_{j_n}^*) \rangle_{\overline{\mathcal{M}}}.
\end{equation}
Following Eq.\ \eqref{StateAmplitudes}, every choice of three-manifold $\mathcal{M}$, with $n$ torus boundaries, determines a quantum state whose entanglement structure is encoded in the topology of $\mathcal{M}$. For a many-torus subsystem $A \subseteq \partial\mathcal{M}$, which corresponds to a stabilizer state possessing flat entanglement spectrum, the entanglement entropy $S(A)$ can be computed~\cite{Salton:2016qpp} using a single replica
\begin{equation}\label{TopologicalEE}
S(A) = -\log\left[\frac{Z(-2\mathcal{M} \cup_{f_A} 2\mathcal{M})}{Z(-\mathcal{M} \cup_{\partial \mathcal{M}} \mathcal{M})^2}\right],
\end{equation}
where $2\mathcal{M} = \mathcal{M} \cup \mathcal{M}$ indicates two copies of $\mathcal{M}$ that are used to prepare the state, and $f_A$ is an exchange diffeomorphism on the tori in $A$ that leaves $\partial\mathcal{M}\setminus A$ invariant.

\subsection{The Modular $S$ and $T$ Transformations}\label{ModularSection}

In a topological quantum field theory, e.g. Chern-Simons theory, any orientation-preserving diffeomorphism of a boundary surface induces a unitary operator on the associated Hilbert space. Since the theory is topological, isotopic diffeomorphisms yield the same operator up to phase, so the resulting projective representation factors through the mapping class group of the surface. For the torus $\mathbb{T}^2$, the mapping class group is isomorphic to $\mathrm{SL}(2,\mathbb{Z})$, generated by the modular $S$ and $T$ transformations, which play a distinguished role in topological state preparation.

The topological action of the modular $T$ transformation corresponds to a Dehn twist, where the torus is cut along a non-contractible cycle, one side is rotated by $2\pi$, and the two open ends are then glued back together. On the Hilbert space $\mathcal{H}_{\mathbb{T}^2}$, this operation is represented by the diagonal unitary
\begin{equation}\label{ModularT}
T_{jj'} = \delta_{j,j'}\, e^{2\pi i(h_j - c/24)},
\end{equation}
where $h_j$ indicates the conformal dimension of representation $j$, and $c$ is the central charge of the theory. For $SU(2)$ at level $\ell$, the torus Hilbert space has dimension $d = \ell + 1$, the central charge is $c = 3\ell/(\ell+2)$, and the conformal dimensions are $h_j = j(j+1)/(\ell+2)$, with $j \in \{0, \frac{1}{2}, \ldots, \frac{\ell}{2}\}$. At level $\ell = 1$ we have $d = 2$, $c = 1$, and $h_j = j(j+1)/3$, so the $SU(2)_1$ modular $T$ matrix becomes
\begin{equation}\label{ModularT_SU21}
T_{ab} = \exp\left[2\pi i\left(h_a - \frac{1}{24}\right)\right]\delta_{ab},
\end{equation}
and acts as a diagonal phase gate on $\mathcal{H}_{\mathbb{T}^2}$. The standard Clifford phase gate $P$ is then constructed from $T_{ab}$ as
\begin{equation}\label{PhaseGate}
P = Z^{(d-1)/2}T,
\end{equation}
with $Z$ the Pauli $Z$ operator and $(d-1)/2$ an integer for odd $d$.

The modular $S$ transformation exchanges the two fundamental cycles of the torus. In the Hilbert space $\mathcal{H}_{\mathbb{T}^2}$, $S$ is represented by the unitary
\begin{equation}\label{ModularS}
S_{jj'} = \frac{1}{\sqrt{d}}\,\omega^{jj'}, \qquad \omega = e^{2\pi i/d},
\end{equation}
where $j, j' \in \{0, \ldots, d-1\}$ label the basis states of $\mathcal{H}_{\mathbb{T}^2}$ and $d = \dim\mathcal{H}_{\mathbb{T}^2}$ is the number of integrable representations. The $S$ matrix in Eq.\ \eqref{ModularS} implements a discrete Fourier transform on $\mathcal{H}_{\mathbb{T}^2}$, transforming from the computational basis $\{\ket{j}\}$. For $SU(2)_1$, with $d = 2$, Eq.\ \eqref{ModularS} reduces to the Hadamard matrix
\begin{equation}\label{HadamardMatrix}
S_{ab} = \frac{1}{\sqrt{2}}\begin{bmatrix} 1 & 1\\ 1 & -1 \end{bmatrix}.
\end{equation}
The operators $S$ and $T$ generate the full action of $\mathrm{SL}(2,\mathbb{Z})$ on $\mathcal{H}_{\mathbb{T}^2}$, and serve as the primary building blocks for the topological construction of Clifford operators in Section~\ref{StabilizerTopology}.

\subsection{The Fusion Tensor and Pauli Operators}\label{FusionTensorSection}

In Chern-Simons theory with gauge group $G$ and level $k$, the fusion tensor $N^{j_3}_{j_1 j_2}$ counts the independent ways representations $j_1$ and $j_2$ can fuse to give representation $j_3$. The identity
\begin{equation}\label{FusionIdentity}
    N^0_{j_1 j_2} = \delta_{j_1, j_2^*}
\end{equation}
ensures that only conjugate representations fuse to the trivial channel. An important manifold~\cite{Salton:2016qpp} for entangled state preparation is the three-boundary manifold $\eta$, defined as a solid torus with two solid tori removed from its interior (Figure~\ref{Eta}).
\begin{figure}[h]
\centering
\includegraphics[width=6cm]{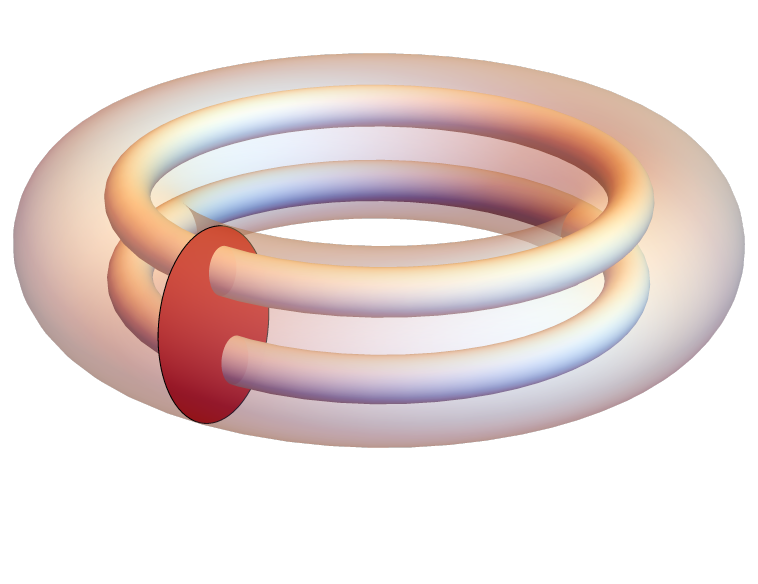}
\caption{The manifold $\eta$, consisting of a solid torus with two tori removed from its interior. Path integration over $\eta$, with Wilson loops through the interior tori, prepares a state whose amplitudes are proportional to $N^{j_3}_{j_1 j_2}$.}
\label{Eta}
\end{figure}
Path integration over $\eta$, with Wilson loops inserted through the interior tori, prepares a tripartite state with amplitudes proportional to $N^{j_3}_{j_1 j_2}$:
\begin{equation}\label{EtaState}
\ket{\eta} \propto \sum_{j_1, j_2, j_3} N^{j_3}_{j_1 j_2} \ket{j_1, j_2, j_3}.
\end{equation}

For $U(1)$ Chern-Simons theory at odd level $k$, the fusion tensor takes the form
\begin{equation}\label{U1FusionTensor}
N^{j_3}_{j_1 j_2} = \delta_{j_3,\, j_1 + j_2 \pmod{k}},
\end{equation}
with $j_1, j_2, j_3 \in \{0, \ldots, k-1\}$, reflecting $U(1)$ charge conservation modulo $k$. Viewed as a map between Hilbert spaces, $N^{j_3}_{j_1 j_2}$ acts as a copy operation in the Fourier-transformed basis after conjugation by the modular $S$ matrix. Viewed as a tripartite state, it describes a GHZ state, providing a path integral method for generating multipartite entanglement.

The fusion tensor further enables an algebraic realization of the Pauli $X$ and $Z$ operators. The Kac-Moody algebra $SU(2)_1$ admits two integrable representations, spin-$0$ and spin-$\frac{1}{2}$, which we label $a=0$ and $a=1$ respectively. Fixing one index of $N^{a_3}_{a_1 a_2}$ to $a_2 = 1$ and defining the operator $X$ by its matrix elements gives
\begin{equation}\label{XMatrixElements}
\bra{a_2} X \ket{a_1} \equiv N^{a_2}_{a_1,\, 1} = \delta_{a_2,\, a_1 + 1 \pmod{2}}.
\end{equation}
The operator $X$, on a state $\ket{a}$, acts as the qubit shift operator
\begin{equation}\label{ShiftFromFusion}
X\ket{a} = \sum_{a_2} \delta_{a_2,\, a + 1 \pmod{2}} \ket{a_2} = \ket{a + 1 \pmod{2}}.
\end{equation}
The Pauli $Z$ operator $Z\ket{a} = \omega^a \ket{a}$, with $\omega = e^{2\pi i/d}$, is obtained by conjugation $Z = SXS^\dagger$. The operators $X$ and $Z$ satisfy the standard commutation relation $XZ = \omega^{-1}ZX$, and are defined independently of the level. For $SU(2)_1$, where $d = 2$, we have $\omega = -1$. A graphical representation of the tensor $N^{a_2}_{a_1,\, 1}$ is given in Figure~\ref{TensorGraph}.
\begin{figure}[h]
\begin{center}
\begin{tikzpicture}
\coordinate (O) at (0,0);
\coordinate (A) at (2,-1);
\coordinate (B) at (-2,-1);
\coordinate (C) at (0,2);
\draw[thick] (O) -- (A);
\draw[thick] (O) -- (B);
\draw[thick] (O) -- (C);
\node at (-4.2,.5) {$N_{a,\,1}^{a_2} = \delta_{a_2,\, a+1 \pmod{2}} = $};
\node at (2.2,-1.2) {$1$};
\node at (-2.2,-1.2) {$a$};
\node at (0.5,2.2) {$a+1\pmod{2}$};
\draw[thick,->] (A) -- ($(O)!0.5!(A)$);
\draw[thick,->] (B) -- ($(O)!0.5!(B)$);
\draw[thick,->] (O) -- ($(O)!0.5!(C)$);
\end{tikzpicture}
\caption{Graphical representation of $N_{a,\,1}^{a_2} = \delta_{a_2,\, a+1\pmod{2}}$ for $SU(2)_1$. Two incoming legs are labeled $a$ and $1$, and the single outgoing leg gives the fusion product $a+1 \pmod{2}$. The tensor is prepared by path integration over $\eta$, Figure~\ref{Eta}, with Wilson loops.}
\label{TensorGraph}
\end{center}
\end{figure}
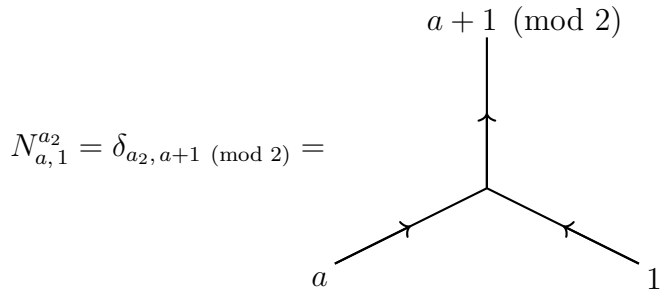

\subsection{The Clifford Group and Stabilizer States}\label{StabilizerTopology}

The Clifford group $\mathcal{C}_n$ is defined as the normalizer~\cite{NielsenChuang2010} of the $n$-qubit Pauli group $\Pi_n$, i.e. the set of unitaries that map $\Pi_n$ to itself under conjugation. Clifford operations are efficiently simulable on a classical computer~\cite{aaronson2004improved,Gottesman1997,Gottesman1998}, and the set of $n$-qubit stabilizer states are exactly those states obtainable by Clifford group action on the computational basis. For qubits, the group $\mathcal{C}_n$ is generated by the Hadamard $H$, phase gate $P$, and controlled-NOT $C_{i,j}$ gates, with respective matrix representations
\begin{equation}\label{CliffordGates}
    H\equiv \frac{1}{\sqrt{2}}\begin{bmatrix}1&1\\1&-1\end{bmatrix}, \quad P\equiv \begin{bmatrix}1&0\\0&i\end{bmatrix}, \quad C_{i,j} \equiv \begin{bmatrix}1 & 0 & 0 & 0\\0 & 1 & 0 & 0\\0 & 0 & 0 & 1\\0 & 0 & 1 & 0\end{bmatrix}.
\end{equation}

Each of the qubit Clifford generators admits a $d$-dimensional generalization that can be realized topologically~\cite{Gross2006,Hostens2005}, where $d = \dim\mathcal{H}_{\mathbb{T}^2}$. The Hadamard gate generalizes to the modular $S$ transformation, given in Eq.\ \eqref{ModularS}. The action of the Hadamard operator transforms Paulis as $X \mapsto Z$, $Z \mapsto X^{-1}$, and is realized topologically by a Dehn twist on the torus boundary, as described in Section~\ref{ModularSection}. Analogously the phase gate is constructed through the action of modular $T$ (Eq.\ \eqref{ModularT}) with Pauli $Z$, as in Eq.\ \eqref{PhaseGate}, and implements the map $X \mapsto XZ$, $Z \mapsto Z$.

The controlled-sum gate $C_{\mathrm{sum}}$ serves as the $d$-dimensional extension of the CNOT gate, and acts on a pair of qudits $i$ and $j$ as
\begin{equation}\label{CSum}
C_{\mathrm{sum}}\ket{i}\ket{j} = \ket{i}\ket{i+j \pmod{d}},
\end{equation}
for odd dimension $d$. For $d$ even, an additional phase is required, such that
\begin{equation}\label{CSumEven}
C_{\mathrm{sum}}\ket{i}\ket{j} = \omega^{\frac{1}{2}(i+j)}\ket{i}\ket{i+j \pmod{d}},
\end{equation}
where $\omega = e^{2\pi i/d}$ as before. The action of $C_{\mathrm{sum}}$ on the Pauli operators maps
\begin{equation}\label{CSumMapping}
\begin{split}
X \otimes I &\mapsto X \otimes X, \qquad I \otimes X \mapsto I \otimes X,\\
Z \otimes I &\mapsto Z \otimes I, \qquad I \otimes Z \mapsto Z^{-1} \otimes Z.
\end{split}
\end{equation}

Topologically, $C_{\mathrm{sum}}$ is constructed by first forming a copy tensor from the adjoint of the fusion tensor $N^{j_3}_{j_1 j_2}$, conjugated by $S$. This copy tensor is then contracted with the fusion tensor. For odd $d$, Figure~\ref{CSumGraph} illustrates the graphical representation of $C_{\mathrm{sum}}$.
\begin{figure}[h]
\begin{center}
\begin{tikzpicture}[thick, scale=1]
\draw (0,0) -- (0,0.5);
\draw (0,1.1) -- (0,2);
\draw (0,2) -- (0,2.9);
\draw (0,3.5) -- (0,4);
\draw (3,0) -- (3,4);
\draw[->] (0,2) -- (1.3,2);
\draw[->] (1.9,2) -- (3,2);
\draw (-0.4,0.5) rectangle (0.4,1.1);
\node at (0,0.8) {\( S^\dagger \)};
\draw (-0.4,2.9) rectangle (0.4,3.5);
\node at (0,3.2) {\( S \)};
\draw (1.3,1.6) rectangle (1.9,2.4);
\node at (1.6,2) {\( S \)};
\filldraw (0,2) circle (2pt);
\filldraw (3,2) circle (2pt);
\draw[->] (0,-0.5) -- (0,0);  \node at (0,-0.7) {\( j \)};
\draw[->] (3,-0.5) -- (3,0);  \node at (3,-0.7) {\( i \)};
\draw[->] (0,4) -- (0,4.5);  \node at (0,4.7) {\( j \)};
\draw[->] (3,4) -- (3,4.5);  \node at (3,4.7) {\( i + j \)};
\node at (.75,1.7) {\( j \)};
\node at (-0.3,2.1) {\( N \)};
\node at (3.3,2.1) {\( N \)};
\end{tikzpicture}
\caption{Graphical representation of $C_{\mathrm{sum}}$ for odd $d$, built by contracting $N^{j_3}_{j_1 j_2}$ with its copy tensor. Modular $S$ is applied on each outgoing leg, with $S^\dagger$ on the incoming leg. Incoming legs have control label and target $j$, while outgoing legs have labels $j$ and $i+j$.}
\label{CSumGraph}
\end{center}
\end{figure}
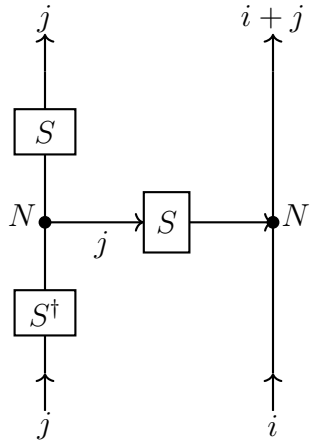
For even dimension $d$, Figure~\ref{CSumPhase} gives the graphical representation of the $C_{\mathrm{sum}}$ operator, with associated phase gate $P$ applied to the outgoing leg.
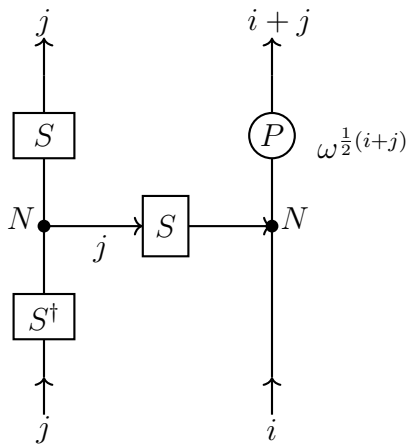
\begin{figure}[h]
\begin{center}
\begin{tikzpicture}[thick, scale=1]
\draw (0,0) -- (0,0.5);
\draw (0,1.1) -- (0,2);
\draw (0,2) -- (0,2.9);
\draw (0,3.5) -- (0,4);
\draw[->] (0,2) -- (1.3,2);
\draw[->] (1.9,2) -- (3,2);
\draw (-0.4,0.5) rectangle (0.4,1.1);
\node at (0,0.8) {\( S^\dagger \)};
\draw (-0.4,2.9) rectangle (0.4,3.5);
\node at (0,3.2) {\( S \)};
\draw (1.3,1.6) rectangle (1.9,2.4);
\node at (1.6,2) {\( S \)};
\draw (3,3.2) circle (0.3);
\node at (3,3.2) {\( P \)};
\draw (3,0) -- (3,2.9);
\draw (3,3.5) -- (3,4);
\filldraw (0,2) circle (2pt);
\filldraw (3,2) circle (2pt);
\draw[->] (0,-0.5) -- (0,0); \node at (0,-0.7) {\( j \)};
\draw[->] (3,-0.5) -- (3,0); \node at (3,-0.7) {\( i \)};
\draw[->] (0,4) -- (0,4.5); \node at (0,4.7) {\( j \)};
\draw[->] (3,4) -- (3,4.5); \node at (3.1,4.7) {\( i + j \)};
\node at (-0.3,2.1) {\( N \)};
\node at (3.3,2.1) {\( N \)};
\node at (.75,1.7) {\( j \)};
\node at (4.2,3.1) {\( \omega^{\frac{1}{2}(i + j)} \)};
\end{tikzpicture}
\caption{Graphical representation of $C_{\mathrm{sum}}$ for even $d$. An additional phase gate $P$, contributing $\omega^{\frac{1}{2}(i+j)}$, is applied to the outgoing target leg, following Eq.\ \eqref{CSumEven}.}
\label{CSumPhase}
\end{center}
\end{figure}

Since Pauli $X$ and $Z$, as well as the $S$, $P$, and $C_{\mathrm{sum}}$ operators are topologically constructible, the full $n$-qudit Clifford group can be described by path integration over manifolds. Moreover, any Clifford unitary $\mathcal{C}$ on $n$ qudits corresponds to a path integral over some three-manifold $\mathcal{M}$ with $2n$ torus boundaries~\cite{Salton:2016qpp}. Accordingly, gluing solid tori to the input boundaries then produces an arbitrary stabilizer state $\mathcal{C}\ket{0}^{\otimes n}$, as described in Theorem $1$ of ~\cite{Salton:2016qpp}.

All stabilizer states have a flat entanglement spectrum, enabling entanglement entropies $S(A)$ to be computed using a single replica, by Eq.\ \eqref{TopologicalEE}, without analytic continuation. Furthermore, the tripartite entanglement that can be distilled by local unitaries between subsystems $A$, $B$, and $C$ (with boundary tripartition $\partial\mathcal{M} = A \cup B \cup C$) can likewise be described~\cite{Salton:2016qpp,Nezami:2016zni} topologically as
\begin{equation}\label{GHZFormula}
g = \frac{S(A) + S(B) + S(C)}{\log d} + \log_d \frac{Z(-3\mathcal{M} \cup_f 3\mathcal{M})}{Z(-\mathcal{M} \cup_{\partial \mathcal{M}} \mathcal{M})^3},
\end{equation}
where $f$ cyclically permutes the three torus copies in $A$, reverses the cyclic order in $B$, and fixes tori in $C$. Eqs.\ \eqref{TopologicalEE} and\ \eqref{GHZFormula} give a complete topological characterization of the bipartite and tripartite entanglement for states prepared by path integration in $U(1)$ Chern-Simons theory.

\subsection{Quantum Magic and Non-Local Magic}\label{MagicSection}

Magic quantifies non-stabilizerness in a quantum system, i.e. the degree to which a state or process lies beyond the capabilities of the Clifford group~\cite{Veitch2013}. It is well known that magic provides a necessary resource for quantum algorithms to achieve advantage over classical computation, and the precise characterization of magic, and its associated generation by unitary operators, remains an active area of research~\cite{Leone:2021rzd,Cao:2024nrx,tirrito:2024, haug:2025, faidon:2026, Munizzi:2025suf, Khumalo:2025xfv}. One canonical measure of magic is the stabilizer R\'{e}nyi entropy (SRE)~\cite{Leone:2021rzd}. For an $n$-qubit pure state $\ket{\psi}$, the SRE of order $\alpha$ is defined
\begin{equation}\label{eq:SRE1}
M_{\alpha}\left(\psi\right) \equiv \frac{1}{1-\alpha} \ln\left(\sum_{P \in \mathcal{P}_n} \frac{\langle\psi|P|\psi\rangle^{2\alpha}}{2^n}\right) - \ln\left(2^n\right),
\end{equation}
where $\mathcal{P}_n$ is the set of $n$-qubit Pauli strings.

The non-stabilizing power $m_p(U(\theta))$, of a unitary $U$, provides an operator level analogue of SRE, measuring the average non-stabilizerness generated by $U$ when acting on stabilizer states~\cite{Varikuti:2025poy}. Let $\mathcal{M}(\ket{\psi})$ denote the linear stabilizer entropy of a pure state $\ket{\psi}$, defined
\begin{equation}\label{LinearStabEntropy}
\mathcal{M}(\ket{\psi}) = 1 - 2^n\,\mathrm{Tr}\left[Q\,(\ket{\psi}\bra{\psi})^{\otimes 4}\right],\quad \textnormal{with} \quad Q = \frac{1}{2^{2n}}\sum_{P \in \mathcal{P}_n} P^{\otimes 4},
\end{equation}
where $Q$ serves as a projector in $\mathcal{H}^{\otimes 4}$. The non-stabilizing power $m_p(U)$ of $U$ is then the average of $\mathcal{M}(U\ket{\psi})$ over all stabilizer states in the Hilbert space
\begin{equation}\label{NSPDef}
m_p(U) = \overline{\mathcal{M}(U\ket{\psi})} = 1 - 2^n\,\mathrm{Tr}\left[Q\,U^{\otimes 4}\,\overline{(\ket{\psi}\bra{\psi})^{\otimes 4}}\,U^{\dagger\otimes 4}\right].
\end{equation}
The value of $m_p(U)$ vanishes if and only if $U$ is a Clifford, and is invariant under composition with Clifford unitaries on either side.

Magic can also reside non-locally in a system, tied closely to the entanglement structure of the state. Given a magic measure $M$ and a bipartite Hilbert space $\mathcal{H} = \mathcal{H}_A \otimes \mathcal{H}_B$, the non-local magic of a state $\ket{\psi} \in \mathcal{H}$ is computed
\begin{equation}\label{eq:NLMagic}
M^{NL}(\psi) \equiv \min_{U_A \otimes U_B} M\left(U_A \otimes U_B \ket{\psi}\right),
\end{equation}
where $U_A$ and $U_B$ are arbitrary unitaries that act locally on $\mathcal{H}_A$ and $\mathcal{H}_B$ respectively. Consequently, non-local magic describes the magic across a Hilbert space bipartition, which cannot be created or destroyed by local operations. Conjugation by local unitary action cannot generate operator entanglement, therefore the operator entanglement of a conjugated Pauli string provides a proxy for non-local magic~\cite{faidon:2026}.

Given a unitary operator $U$ acting on $\mathcal{H}$, the operator linear entropy of $U$ is then~\cite{faidon:2026}
\begin{equation}\label{ElinDef}
E_{\mathrm{lin}}(U) = 1-\frac{1}{d^{2n}}\,\mathrm{Tr}\left(T^A_{12}\,U^{\otimes 2}\,T^A_{12}\,U^{\dagger\otimes 2}\right),
\end{equation}
where $T^A_{12}$ is a partial swap of the $A$ subsystems of the doubled space $\mathcal{H}_A^1 \otimes \mathcal{H}_B^1 \otimes \mathcal{H}_A^2 \otimes \mathcal{H}_B^2$. Equivalently, given the operator Schmidt decomposition
\begin{equation}\label{OpSchmidtDecomp}
    U/\sqrt{d^n} = \sum_i \sqrt{\lambda_i}\, V_i \otimes W_i, 
\end{equation}
where $\{V_i\}$ and $\{W_i\}$ are orthonormal under the Hilbert-Schmidt inner product and $\sum_i \lambda_i = 1$, the operator linear entropy reduces to
\begin{equation}\label{ELimReduction}
    E_{\mathrm{lin}}(U) = 1 - \sum_i \lambda_i^2.  
\end{equation}
The quantity $E_{\mathrm{lin}}(U)$ is zero when $U$ has Schmidt rank $1$, i.e. when $U = V \otimes W$ is a product operator. Moreover, $E_{\mathrm{lin}}(U) > 0$ when $U$ has bipartite operator structure.

Operator linear entropy serves as a probe for non-local magic~\cite{faidon:2026} through the following. For an $n$-qubit unitary $U$ acting on $\mathcal{H} = \mathcal{H}_A \otimes \mathcal{H}_B$,
\begin{equation}\label{NonLocalMagicCriterion}
E_{\mathrm{lin}}\left(U^\dagger P U \right) = 0 \quad \forall P \in \mathcal{P}_n \iff U = (V \otimes W)\,C, \quad V \in \mathcal{U}_A,\; W \in \mathcal{U}_B,
\end{equation}
where $C$ is a Clifford unitary and $\mathcal{P}_n$ is the set of all $n$-qubit Pauli strings. Otherwise stated, conjugation by $U$ preserves a product structure on every Pauli string if and only if $U$ is the product of local unitaries followed by a Clifford. Consequently, any $U$ satisfying Eq.\ \eqref{NonLocalMagicCriterion} cannot generate non-local magic. In what follows we employ $E_{\mathrm{lin}}$ in two distinct roles: $E_{\mathrm{lin}}(U)$ measures the operator entanglement of the gate $U$ itself, while $E_{\mathrm{lin}}(U^\dagger P U)$ detects whether conjugation by $U$ rotates a local Pauli into a genuinely bipartite operator, as in criterion\ \eqref{NonLocalMagicCriterion}. These two quantities have in general different functional forms and are maximized at different parameter values, as we will show in Section~\ref{sec:magic}.

\subsection{Group Cohomology}\label{subsec:cohomology}

Homology and cohomology provide useful algebraic tools for classifying the global structure of topological spaces and groups. Given a topological space, homology associates to that space a sequence of abelian groups that encode features such as connected components, loops, or defects. Dually, cohomology assigns algebraic data, e.g. functions or phases, to topological features of that space, describing topological action in gauge theory. In Section~\ref{sec:DW} we utilize cohomology to give a topological realization of the T gate, and therefore offer a short review of essential terminology and constructions here.

We restrict our attention to the cohomology of finite groups, which is the setting for the Dijkgraaf-Witten theory used in Section~\ref{sec:DW}. Given a group $G$, a $k$-cochain on $G$ is a function $\omega$ that assigns a phase $e^{i\alpha} \in U(1)$ to each ordered $k$-tuple of $g \in G$,
\begin{equation}\label{kCochain}
\omega\,\colon \underbrace{G \times G \times \cdots \times G}_{k} \to U(1).
\end{equation}
The set of all $k$-cochains forms an abelian group $C^k(G, U(1))$ under pointwise multiplication.

We further define an operator $\delta$, known as the coboundary operator, that  maps $k$-cochains to $(k+1)$-cochains (the associated $(k+1)$ coboundary). Formally, $\delta$ is defined
\begin{equation}\label{CoboundaryMap}
\delta\,\colon\, C^k(G, U(1)) \to C^{k+1}(G, U(1)).
\end{equation}
As an example, given a $2$-cochain $\sigma(a, b)$ the coboundary $\delta\sigma$ is the $3$-cochain
\begin{equation}\label{CoboundaryExample}
(\delta\sigma)(a, b, c) = \frac{\sigma(b, c)\;\sigma(a, b \cdot c)}{\sigma(a \cdot b, c)\;\sigma(a, b)},
\end{equation}
where $a, b, c \in G$ and $\cdot$ indicates the group operation of $G$. The operator $\delta$ evaluates the original cochain on subtuples of the input tuple, with alternating factors appearing in the numerator and denominator, and generalizes to arbitrary $k$ in a standard way~\cite{Dijkgraaf:1990}. An important property of $\delta$ is that applying it twice produces the trivial action
\begin{equation}\label{DeltaSquared}
\delta^2 = 1,
\end{equation}
for any cochain.

A $k$-cocycle is a $k$-cochain $\omega$ that satisfies the condition
\begin{equation}\label{CocycleCondition}
    \delta\omega = 1,
\end{equation}
and a $k$-coboundary is a $k$-cochain of the form $\omega = \delta\sigma$ for $(k-1)$-cochain $\sigma$. Following Eq.\ \eqref{CocycleCondition} every coboundary is a cocycle, but not every cocycle is a coboundary. The $k$-th cohomology group of $G$ with $U(1)$ coefficients, denoted $H^k(G, U(1))$, is then defined as the quotient
\begin{equation}\label{CohomologyGroup}
H^k(G, U(1)) = \frac{\ker\delta\,\colon\, C^k \to C^{k+1}}{\mathrm{im}\,\delta\,\colon\, C^{k-1} \to C^k} = \frac{\{k\textnormal{-cocycles}\}}{\{k\textnormal{-coboundaries}\}}.
\end{equation}
The elements of $H^k(G, U(1))$ are equivalence classes of cocycles, where two cocycles are identified together if they differ by a coboundary. Intuitively, $H^k(G, U(1))$ classifies the distinct ways to assign $U(1)$ phases to $k$-tuples of group elements that satisfy the cocycle condition, and are not removable by a redefinition of phases on $(k-1)$-tuples. We use the notation
\begin{equation}
    [\omega] \in H^k(G, U(1))
\end{equation}
to denote the equivalence class of a cocycle $\omega$ in $H^k(G, U(1))$. 

For the cyclic group $G = \mathbb{Z}_N$, the third ($k=3$) cohomology group is
\begin{equation}\label{H3ZN}
H^3(\mathbb{Z}_N, U(1)) \cong \mathbb{Z}_N.
\end{equation}
Consequently there are $N$ inequivalent $3$-cocycles on $\mathbb{Z}_N$, labeled by parameter $p \in \{0, 1, \ldots, N-1\}$. The case $p = 0$ corresponds to the trivial cocycle $\omega_0 = 1$, while the case $p = 1$ gives the generating cocycle $\omega_1$ from which all others can be obtained as powers
\begin{equation}\label{OmegaPowers}
    \omega_p = \omega_1^p.
\end{equation}
As we show in Section~\ref{sec:DW}, the generator $\omega_1$ for $G = \mathbb{Z}_4$ encodes the topological action of Dijkgraaf-Witten theory, where the modular $T$-matrix realizes the T gate.

\section{Topological Non-Local Magic in $SU(2)_1$}\label{sec:magic}

In this section we construct a one-parameter family of non-Clifford gates in $SU(2)_1$ Chern-Simons theory, and characterize their magic generating capabilities. We begin by demonstrating that the two-qubit Pauli string $X_1 \otimes X_2$ can be realized topologically by path integration over the disjoint union of $\eta$ manifolds, shown in Figure~\ref{Eta}. The operator $X_1 \otimes X_2$ serves as the generator of the Ising interaction gate $U(\theta)$, which we show produces non-local magic for all values of $\theta$ away from a select few  points using the operator entanglement criterion of Eq.\ \eqref{NonLocalMagicCriterion}. We then compute the non-stabilizing power of $U(\theta)$ by averaging over all two-qubit stabilizer states.

\subsection{Path Integral Construction of the Ising Interaction Gate in $SU(2)_1$}\label{subsec:XX_construction}

As established in Section~\ref{FusionTensorSection}, path integration over the manifold $\eta$ (Figure~\ref{Eta}) with a Wilson loop in the spin-$1/2$ representation gives a topological realization for the Pauli $X$ operator, via the fusion tensor $N^{a_2}_{a,\,1} = \delta_{a_2,\,a+1\pmod{2}}$. Constructing the $2$-qubit operator $X_1\otimes X_2$, we take the disjoint union $\eta_1 \sqcup \eta_2$, two copies of the manifold $\eta$, each possessing an independent Wilson loop in the spin-$1/2$ representation. The path integral over a disjoint union of manifolds factors into the tensor product of the individual path integrals~\cite{Atiyah1990,Salton:2016qpp}. Since integrating over a single $\eta_i$ with a spin-$1/2$ Wilson loop prepares the operator $X_i$, as defined in Section~\ref{FusionTensorSection}, path integration over $\eta_1 \sqcup \eta_2$ prepares $X_1\otimes X_2$ as
\begin{equation}\label{XXFactorization}
Z(\eta_1 \sqcup \eta_2) = Z(\eta_1) \otimes Z(\eta_2) \;\;\longrightarrow\;\; X_1 \otimes X_2.
\end{equation}

Acting with $X_1\otimes X_2$ on a $2$-qubit computational basis state, and expressing each factor using the fusion tensor in Eq.\ \eqref{XMatrixElements}, gives
\begin{equation}\label{XXConstruction}
X_1 \otimes X_2\,\ket{j}\ket{\ell} = N^{j+1\bmod 2}_{j,\,1}\,N^{\ell+1\bmod 2}_{\ell,\,1}\,\ket{j+1\bmod 2}\ket{\ell+1\bmod 2},
\end{equation}
which describes the tensor product of the two shift operations, one acting on each qubit. The manifold $\eta_1 \sqcup \eta_2$ is sufficient to realize $X_1\otimes X_2$ since each factor acts independently on a separate boundary tori, one for each $\eta$. 

The Pauli string $X_1\otimes X_2$, realized by path integration over $\eta_1\sqcup\eta_2$, serves as the generator of the Ising interaction gate $U(\theta)$ defined
\begin{equation}\label{GateDef}
U(\theta) = \exp\left(-i\,\frac{\theta}{2}\,X_1\otimes X_2\right), \qquad \theta \in [0,\pi].
\end{equation}
Since $(X_1\otimes X_2)^2 = \mathbb{I}$, the operator $X_1\otimes X_2$ is involutory, and the matrix exponential reduces%
\footnote{For any matrix $A$ satisfying $A^2 = \mathbb{I}$, the Euler identity $e^{-i\alpha A} = \cos(\alpha)\,\mathbb{I} - i\sin(\alpha)\,A$ offers a simplified form for the exponentiated matrix $A$.}%
 to
\begin{equation}\label{GateExpanded}
U(\theta) = \cos\left(\frac{\theta}{2}\right)\mathbb{I} - i\sin\left(\frac{\theta}{2}\right) \left(X_1\otimes X_2\right).
\end{equation}
The gate $U(\theta)$ is a Clifford unitary at three distinct values for $\theta$. For $\theta = 0$ the gate $U(\theta)$ reduces to the identity, at $\theta = \pi$ it is proportional to $X_1\otimes X_2$, and at $\theta = \pi/2$ it becomes $1/\sqrt{2}(\mathbb{I} - iX_1\otimes X_2)$, which is equivalent to $C_{\mathrm{sum}}$, introduced in Eq.\eqref{CSum}, up to local unitary action. For all $\theta\in(0,\pi/2)\cup(\pi/2,\pi)$ the gate $U(\theta)$ is non-Clifford and, as we will now show, produces non-local magic.

Consider the bipartite Hilbert space $\mathcal{H} = \mathcal{H}_A \otimes \mathcal{H}_B$, where each factor $\mathcal{H}_A$ and $\mathcal{H}_B$ is a copy of the $SU(2)_1$ torus Hilbert space $\mathcal{H}_{\mathbb{T}^2}$. Applying the operator Schmidt decomposition from Eq.\ \eqref{OpSchmidtDecomp} to Eq.\ \eqref{GateExpanded}, with $d = 2$ and $n = 2$, gives
\begin{equation}\label{SchmidtDecomp}
\frac{U(\theta)}{2} = \cos\left(\frac{\theta}{2}\right)\frac{\mathbb{I}_1}{\sqrt{2}}\otimes\frac{\mathbb{I}_2}{\sqrt{2}} - i\sin \left(\frac{\theta}{2}\right)\frac{X_1}{\sqrt{2}}\otimes\frac{X_2}{\sqrt{2}},
\end{equation}
with Schmidt eigenvalues $\lambda_1 = \cos^2(\theta/2)$ and $\lambda_2 = \sin^2(\theta/2)$. The second term of Eq.\ \eqref{SchmidtDecomp} is the $X_1 \otimes X_2$ operator realized by path integration over $\eta_1\sqcup\eta_2$, following Eq.\ \eqref{XXConstruction}, and the first is simply the identity. The parameter $\theta$ controls the relative weight between the two contributions.

Substituting eigenvalues $\lambda_1$ and $\lambda_2$ into Eq.\ \eqref{ELimReduction} gives the operator linear entropy $E_{\mathrm{lin}}$ of $U(\theta)$ as
\begin{equation}\label{ElinU}
E_{\mathrm{lin}}\bigl(U(\theta)\bigr) = 1 - \lambda_1^2 - \lambda_2^2 = \frac{1}{2}\sin^2(\theta).
\end{equation}
The value of $E_{\mathrm{lin}}\bigl(U(\theta)\bigr)$ measures the operator entanglement of the gate $U(\theta)$ itself. This quantity vanishes at $\theta = 0$ and $\theta = \pi$, when a single Schmidt term dominates, and achieves a maximum value of $1/2$ at $\theta = \pi/2$, when both Schmidt eigenvalues have equal weight.

\subsection{Non-Local Magic from Conjugated Pauli Strings}\label{subsec:nonlocal_magic}

We now address the criterion for a unitary to generate non-local magic given by Eq.\ \eqref{NonLocalMagicCriterion}. For a unitary $U$ conjugating all $n$-qubit Pauli strings $\mathcal{P}_n$, Eq.\ \eqref{NonLocalMagicCriterion} requires that $E_{\mathrm{lin}}(U^\dagger P\,U) > 0$ for at least one $P \in \mathcal{P}_n$. 

Examining $U(\theta)$ in Eq.\ \eqref{GateExpanded}, we observe that the action of $U^{\dagger}(\theta)$ will fix all Pauli strings that commute with $X_1\otimes X_2$, and rotate all anti-commuting Pauli strings by an angle $\theta$. Therefore, if $\{P,\ X_1\otimes X_2\} = 0$ then 
\begin{equation}\label{PauliRotation}
    U(\theta)^\dagger P\, U(\theta) = \cos(\theta)\,P - i\sin(\theta)\,P(X_1\otimes X_2),
\end{equation}
produces a superposition of $P$ with the operator $P(X_1\otimes X_2)$. Non-local magic is generated when $P$ is a local operator, but $P(X_1\otimes X_2)$ is a genuinely bipartite operator and therefore acts non-trivially across the $A|B$ Hilbert space partition. Accordingly, the gate $U(\theta)$ generates non-local magic, which we will now prove.

\begin{proposition}\label{prop:nonlocal}
For $\theta \in (0,\pi/2)\cup(\pi/2,\pi)$, the Ising interaction gate $U(\theta)$ defined in Eq.\ \eqref{GateDef} generates non-local magic. Specifically for $P = Z_1\otimes\mathbb{I}_2$,
\begin{equation}\label{ConjugationResult}
U(\theta)^\dagger\,(Z_1\otimes\mathbb{I}_2)\,U(\theta) = \cos(\theta)\,(Z_1\otimes\mathbb{I}_2) - \sin(\theta)\,(Y_1\otimes X_2),
\end{equation}
and the operator linear entropy of the conjugated Pauli string satisfies
\begin{equation}\label{ElinResult}
E_{\mathrm{lin}}\left(U(\theta)^\dagger\,(Z_1\otimes\mathbb{I}_2)\,U(\theta)\right) = \frac{1}{2}\sin^2(2\theta) > 0.
\end{equation}
\end{proposition}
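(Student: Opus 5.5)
The plan is a direct computation in three steps: conjugate $Z_1\otimes\mathbb{I}_2$ using the closed form of Eq.\ \eqref{GateExpanded}, read off the operator Schmidt decomposition of the result, evaluate Eq.\ \eqref{ELimReduction}, and then apply criterion \eqref{NonLocalMagicCriterion}.

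\emph{Conjugation.} First note that $Z_1\otimes\mathbb{I}_2$ anticommutes with $X_1\otimes X_2$, because $Z_1$ anticommutes with $X_1$. Hence $P\,U(\theta)=U(\theta)^\dagger P$ for $P=Z_1\otimes\mathbb{I}_2$, and so
\begin{equation*}
U(\theta)^\dagger P\,U(\theta)=P\,U(\theta)^2=P\left(\cos\theta\,\mathbb{I}-i\sin\theta\,X_1\otimes X_2\right).
\end{equation*}
This is Eq.\ \eqref{PauliRotation}. Using the single-qubit product $Z_1X_1=\pm iY_1$, the second term becomes $\pm\sin\theta\,(Y_1\otimes X_2)$, which gives Eq.\ \eqref{ConjugationResult}. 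With the usual convention $ZX=iY$ one actually gets a plus sign, so I would state the overall sign together with the phase convention for $Y$. The sign is irrelevant for everything that follows.

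\emph{Operator entropy.} Divide by $\sqrt{d^n}=2$ and write the result as
\begin{equation*}
\cos\theta\,\tfrac{Z_1}{\sqrt2}\otimes\tfrac{\mathbb{I}_2}{\sqrt2}\pm\sin\theta\,\tfrac{Y_1}{\sqrt2}\otimes\tfrac{X_2}{\sqrt2}.
\end{equation*}
Normalized Paulis are orthonormal in the Hilbert--Schmidt inner product: $\{Z_1,Y_1\}$ on $A$ and $\{\mathbb{I}_2,X_2\}$ on $B$. This expression is therefore already an operator Schmidt decomposition in the sense of Eq.\ \eqref{OpSchmidtDecomp}, with $\lambda_1=\cos^2\theta$ and $\lambda_2=\sin^2\theta$, whose sum is $1$. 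Substituting into Eq.\ \eqref{ELimReduction} gives
\begin{equation*}
E_{\mathrm{lin}}=1-\cos^4\theta-\sin^4\theta=2\sin^2\theta\cos^2\theta=\tfrac12\sin^2(2\theta).
\end{equation*}
This vanishes exactly at $\theta\in\{0,\pi/2,\pi\}$ and is strictly positive on the stated interval.

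\emph{Conclusion.} Since $E_{\mathrm{lin}}(U^\dagger PU)>0$ for this particular $P$, the left-hand condition of criterion \eqref{NonLocalMagicCriterion} fails. Hence $U(\theta)$ is not of the form $(V\otimes W)C$, and the criterion's statement that such unitaries cannot generate non-local magic is the only obstruction. The step needing the most care is this last one: \eqref{NonLocalMagicCriterion} as stated is an iff characterizing the class that cannot create non-local magic. I would make explicit that, following \cite{faidon:2026}, lying outside this class is exactly what is meant by ``generates non-local magic,'' rather than claiming a quantitative bound on $M^{NL}$. The remaining steps are routine Pauli algebra.
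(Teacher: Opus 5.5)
Your proposal is correct and follows the paper's route: compute the conjugate explicitly, read off the two-term operator Schmidt decomposition with weights $\cos^2\theta$ and $\sin^2\theta$ from Hilbert--Schmidt orthogonality of the Paulis, evaluate $E_{\mathrm{lin}}=\tfrac12\sin^2(2\theta)$, and invoke the criterion. The only difference is that you get $U(\theta)^\dagger P\,U(\theta)=P\,U(\theta)^2$ directly from anticommutation, whereas the paper expands into $\cos^2$, commutator and $GPG$ terms; your sign remark is also right, since the paper's minus sign comes from its use of $Z_1X_1=-iY_1$, while the standard $ZX=iY$ gives $+\sin\theta\,(Y_1\otimes X_2)$ with $E_{\mathrm{lin}}$ unchanged.
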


\begin{proof}
Let $G \equiv X_1\otimes X_2$ such that $U(\theta)$ and $U^{\dagger}(\theta)$ may be written
\begin{equation}\label{Rewriting}
\begin{split}
    U(\theta) &= \cos(\theta/2)\,\mathbb{I} - i\sin(\theta/2)\,G,\\
    U(\theta)^\dagger &= \cos(\theta/2)\,\mathbb{I} + i\sin(\theta/2)\,G.\\
\end{split}
\end{equation}
Conjugating $P = Z_1\otimes\mathbb{I}_2$ by $U(\theta)$ and expanding, we have
\begin{align}\label{ConjugationExpansion}
U(\theta)^\dagger(Z_1\otimes\mathbb{I}_2)U(\theta) &= \cos^2\frac{\theta}{2}\,(Z_1\otimes\mathbb{I}_2) + i\cos\frac{\theta}{2}\sin\frac{\theta}{2}\bigl[G,\,Z_1\otimes\mathbb{I}_2\bigr] \notag\\
&\quad \quad + \sin^2\frac{\theta}{2}\,G\,(Z_1\otimes\mathbb{I}_2)\,G.
\end{align}
Employing the Pauli relations $X_1 Z_1 = iY_1$ and $Z_1 X_1 = -iY_1$, we can expand the commutator $\bigl[G,\,Z_1\otimes\mathbb{I}_2\bigr]$ as
\begin{equation}
\bigl[G,\,Z_1\otimes\mathbb{I}_2\bigr] = (X_1 Z_1 - Z_1 X_1)\otimes X_2 = 2iY_1\otimes X_2,
\end{equation}
and therefore the second term in Eq.\ \eqref{ConjugationExpansion} contributes
$-\sin(\theta)\,(Y_1\otimes X_2)$. For the final term in Eq.\ \eqref{ConjugationExpansion} we have
\begin{equation}
    G(Z_1\otimes\mathbb{I}_2)G = (X_1 Z_1 X_1)\otimes\mathbb{I}_2 = -Z_1\otimes\mathbb{I}_2.
\end{equation}
Applying the relation $\cos^2(\theta/2) - \sin^2(\theta/2) = \cos(\theta)$ then gives Eq.\ \eqref{ConjugationResult}. We can observe that the local operator $Z_1\otimes\mathbb{I}_2$ has been rotated into a superposition with the bipartite operator $Y_1\otimes X_2$, with mixing parameter $\theta$.

Establishing positivity of the operator linear entropy in Eq.\ \eqref{ElinResult}, we first write the conjugated operator%
\footnote{Conjugation by a unitary preserves unitarity. The operator $U^\dagger P U$ is unitary, therefore the operator Schmidt decomposition of Eq.\ \eqref{OpSchmidtDecomp} carries the normalization $(U^\dagger P U)/\sqrt{d^n} = (U^\dagger P U)/2$.} %
from Eq.\ \eqref{ConjugationResult} in its Schmidt form
\begin{equation}\label{SchmidtFormOp}
\frac{1}{2}\left(\cos(\theta)\,(Z_1\otimes\mathbb{I}_2) - \sin(\theta)\,(Y_1\otimes X_2)\right) = \cos(\theta)\,\frac{Z_1}{\sqrt{2}}\otimes\frac{\mathbb{I}_2}{\sqrt{2}} - \sin(\theta)\,\frac{Y_1}{\sqrt{2}}\otimes\frac{X_2}{\sqrt{2}}.
\end{equation}
Both terms in Eq.\ \eqref{SchmidtFormOp} have orthogonal Hilbert-Schmidt inner product, since $\mathrm{Tr}(Z^\dagger Y) = 0$ and $\mathrm{Tr}(\mathbb{I}^\dagger X) = 0$, yielding Schmidt eigenvalues 
\begin{equation}
\begin{split}
    \mu_1 &= \cos^2(\theta),\\
    \mu_2 &= \sin^2(\theta).
\end{split}
\end{equation}
Applying the reduced operator linear entropy form in Eq.\ \eqref{ELimReduction} gives
\begin{equation}
E_{\mathrm{lin}}(U(\theta)) = 1 - \mu_1^2 - \mu_2^2 = \frac{1}{2}\sin^2(2\theta),
\end{equation}
which is strictly positive for $\theta\in(0,\pi/2)\cup(\pi/2,\pi)$. Therefore the Ising interaction gate $U(\theta)$ generates non-local magic according to the criterion in Eq.\ \eqref{NonLocalMagicCriterion}.
\end{proof}

\subsection{Non-Stabilizing Power of Ising Gate}\label{subsec:nsp}

We now compute the average magic generated by the Ising interaction gate when acting on stabilizer states, according to the non-stabilizing power $m_p(U)$ defined in Eq.\ \eqref{NSPDef}. At two qubits there are $60$ stabilizer states, $12$ of which are eigenstates of $U(\theta)$ and acquire no magic under its action. For each of the remaining $48$ states $\ket{\psi}$, the linear stabilizer entropy of $U(\theta)\ket{\psi}$, computed using Eq.\ \eqref{LinearStabEntropy}, is
\begin{equation}
    \mathcal{M}(U(\theta)|\psi\rangle) = \sin^2(\theta)\cos^2(\theta),
\end{equation}
as verified by direct calculation of the Pauli expectation values~\cite{Varikuti:2025poy}. The average over all $60$ stabilizer states is then
\begin{equation}\label{NSPResult}
m_p\bigl(U(\theta)\bigr) = \frac{48}{60}\sin^2(\theta)\cos^2(\theta) = \frac{1}{5}\sin^2(2\theta).
\end{equation}
We highlight that $m_p\bigl(U(\theta)\bigr)$ vanishes for $\theta = 0$, $\pi/2$, and $\pi$, and reaches a maximum value of $1/5$ when $\theta = \pi/4$.

Table~\ref{tab:three_measures} collects the three operator non-stabilizer quantities computed in this section.
\begin{table}[h]
\centering
\renewcommand{\arraystretch}{1.4}
\begin{tabular}{llcc}
\hline
Quantity & Definition & Formula & Peaks at \\
\hline
$E_{\mathrm{lin}}(U(\theta))$ & Op.\ entanglement of $U$ & $\frac{1}{2}\sin^2(\theta)$ & $\theta = \pi/2$ \\
$E_{\mathrm{lin}}(U^\dagger(Z_1\otimes\mathbb{I}_2)U)$ & Op.\ entanglement of conj. Pauli & $\frac{1}{2}\sin^2(2\theta)$ & $\theta = \pi/4,\,3\pi/4$ \\
$m_p(U(\theta))$ & Non-stabilizing power & $\frac{1}{5}\sin^2(2\theta)$ & $\theta = \pi/4,\,3\pi/4$ \\
\hline
\end{tabular}
\caption{Three operator entanglement measures for the gate $U(\theta)$ in $SU(2)_1$. $E_{\mathrm{lin}}(U)$ gives the operator entanglement of $U(\theta)$ itself. $E_{\mathrm{lin}}(U^\dagger PU)$ detects if conjugation by $U$ maps a local Pauli to a bipartite operator, following Eq.\ \eqref{NonLocalMagicCriterion}. $m_p$ measures the average magic generated on stabilizer states, and shares the $\sin^2(2\theta)$ dependence with $E_{\mathrm{lin}}$.}
\label{tab:three_measures}
\end{table}
Both $E_{\mathrm{lin}}(U^\dagger(Z_1\otimes\mathbb{I}_2)U)$ and $m_p(U(\theta))$ share a $\sin^2(2\theta)$ dependence, while $E_{\mathrm{lin}}(U(\theta))$ depends on $\sin^2(\theta)$. Accordingly, $E_{\mathrm{lin}}(U(\theta))$ peaks at $\theta = \pi/2$ where the other two quantities vanish. We emphasize that the shared functional form of $E_{\mathrm{lin}}(U^\dagger PU)$ and $m_p(U(\theta))$ is not true in general, but is specific to this gate since both quantities are determined by the action of $U^{\dagger}(\theta)$ on the same anti-commuting two-qubit Pauli strings.

In this section we demonstrated that path integration in $SU(2)_1$ Chern-Simons theory can prepare the Ising interaction gate $U(\theta)$, with variable non-local magic set by a single parameter $\theta$. The origin of the $U(\theta)$ generator $X_1 \otimes X_2$ in the $SU(2)_1$ fusion structure provides an explicit connection between the algebraic data of the TQFT and the resource-theoretic properties of the resulting gate. Fundamentally $U(\theta)$ is a two-qubit construction generated from a single Pauli string. The underlying $\mathbb{Z}_2$ fusion rules that enable this construction are likewise limited, since $a \otimes b = a + b \pmod{2}$ can only determine input parity, rather than individual values. In Section~\ref{sec:su2k} we show that this obstruction prevents a topological realization of the three-qubit Toffoli gate in $SU(2)_1$, and identify the minimal level at which this gate becomes accessible.

\section{Towards Topological Toffoli Gates in $SU(2)_k$}\label{sec:su2k}

The Ising interaction gate of Section~\ref{sec:magic} demonstrates that non-local magic is accessible within $SU(2)_1$, but its construction relies on a single two-qubit Pauli string. In this section we ask whether more complex magic gates can be realized topologically, using the three-qubit Toffoli gate as an explicit target. We first show that the $\mathbb{Z}_2$ fusion structure of $SU(2)_1$ obstructs Toffoli gate construction, then identify $SU(2)_3$ as the minimal level whose fusion rules admit the required conditional logic. We describe the necessary properties of any manifold which realizes Toffoli gate construction via path integration, establish the necessity of its existence through a mapping class group density argument, and state remaining open problems.

\subsection{The $\mathbb{Z}_2$ Obstruction in $SU(2)_1$}\label{subsec:obstruction}

The Toffoli gate, denoted $\mathrm{CCNOT}_{a,b,c}$, acts on three qubits by applying $X$ to the target qubit $c$ if and only if both control qubits $a$ and $b$ are in state $\ket{1}$, stated
\begin{equation}\label{ToffoliAction}
\mathrm{CCNOT}_{a,b,c}\ket{a, b, c} = \ket{a,\, b,\, c \oplus ab}.
\end{equation}
Realizing the Toffoli gate requires distinguishing the control configuration $\ket{11}$ from configurations $\ket{00}$, $\ket{01}$, and $\ket{10}$. The AND condition necessary to implement this distinction cannot be reduced to the parity condition imposed by the $SU(2)_1$ fusion rules, as we now show.

In $SU(2)_1$ there are two representations, $\{0, 1/2\}$, and the single non-trivial fusion matrix is therefore given by
\begin{equation}\label{N12_SU21}
N_{\frac{1}{2}} = \begin{pmatrix} 
0 & 1 \\ 
1 & 0 
\end{pmatrix},
\end{equation}
where rows and columns are ordered as $\{0, 1/2\}$. Each row and column contains exactly one non-zero entry, so the fusion of any two representations produces a unique output with no branching into multiple fusion channels. Accordingly, $N_{1/2}$ reproduces the $\mathbb{Z}_2$ addition rule $a \otimes b = a + b \pmod{2}$.

Every amplitude computed by path integration in $SU(2)_1$ is built from contractions of $N_{1/2}$, following Eq.\ \eqref{StateAmplitudes} and the surgery rules of~\cite{Salton:2016qpp}, and therefore depends on the input representation labels only through their sum modulo $2$. In particular, the control configurations $\ket{1/2, 1/2}$ and $\ket{0, 0}$ both fuse to the trivial channel
\begin{equation}\label{LevelOneRule}
\frac{1}{2} + \frac{1}{2} = 0 \pmod{2}, \qquad 0 + 0 = 0 \pmod{2},
\end{equation}
and are therefore indistinguishable to any path integral in $SU(2)_1$. Since the Toffoli gate requires distinguishing these two configurations, the AND condition is inaccessible at this level.

\subsection{Fusion Rules of $SU(2)_3$}\label{subsec:su23_fusion}

We next show that $SU(2)_3$ is the minimal level required to realize the Toffoli gate, with fusion rules rich enough to support the conditional logic necessary to implement this operation. Recall that representations of $SU(2)_k$ Chern-Simons theory are labeled by spins $j \in \{0,\ 1/2,\ 1,\ \ldots,\ k/2\}$, and their fusion rules follow a truncated Clebsch-Gordan series~\cite{Witten1989,MooreSeiberg1989}
\begin{equation}\label{TruncatedCG}
j_1 \otimes j_2 = \bigoplus_{j_3 = |j_1-j_2|}^{\min(j_1+j_2,\, k-j_1-j_2)} j_3,
\end{equation}
where the sum occurs in integer steps and the upper bound is determined by the level-$k$ truncation. 

For $k = 3$ the integrable representations are $j \in \{0,\ 1/2,\ 1,\ 3/2\}$, and the associated torus Hilbert space $\mathcal{H}_{\mathbb{T}^2} \cong \mathbb{C}^4$, as described in Section~\ref{ModularSection}, has dimension $d = k + 1 = 4$. For each representation $a$ we associate a fusion matrix $N_a$, with matrix elements $N^c_{ab}$ that encode the number of ways $a$ and $b$ fuse to produce $c$. As before, the rows and columns of each $N^c_{ab}$ are ordered as $\{0, \frac{1}{2}, 1, \frac{3}{2}\}$, and the four matrices are given by
\begin{equation}\label{FusionMatrices_SU23}
\begin{aligned}
N_0 &= \begin{pmatrix} 
1&0&0&0\\
0&1&0&0\\
0&0&1&0\\
0&0&0&1 
\end{pmatrix}, 
\qquad N_{\frac{1}{2}} = \begin{pmatrix} 
0&1&0&0\\
1&0&1&0\\
0&1&0&1\\
0&0&1&0 
\end{pmatrix}, \\[8pt]
N_1 &= \begin{pmatrix} 
0&0&1&0\\
0&1&0&1\\
1&0&1&0\\
0&1&0&1 
\end{pmatrix}, 
\qquad N_{\frac{3}{2}} = \begin{pmatrix} 
0&0&0&1\\
0&0&1&0\\
0&1&0&1\\
1&0&1&0 
\end{pmatrix}.
\end{aligned}
\end{equation}
Matrix $N_0$ is the identity and reflects trivial fusion with the vacuum. The first distinction of $SU(2)_3$ from $SU(2)_1$ appears in $N_{1/2}$, where the second row, corresponding to $b = 1/2$, contains non-zero entries in columns $c = 0$ and $c = 1$. Consequently, the fusion of two spin-$1/2$ representations branches into two channels,
\begin{equation}\label{BranchingRule}
\frac{1}{2} \otimes \frac{1}{2} = 0 \oplus 1,
\end{equation}
each corresponding to a distinct topological sector of the path integral. 

The branching in Eq.\ \eqref{BranchingRule} is absent at level $k = 1$, as shown in Eq.\ \eqref{LevelOneRule}. Branching does appear at $k = 2$, where the fusion rule gives $1/2 \otimes 1/2 = 0 \oplus 1$ with three available representations $\{0,\ 1/2,\ 1\}$. However, as we discuss further in Section~\ref{subsec:existence}, Toffoli gate construction additionally requires~\cite{Freedman2003} that the mapping class group representation be dense in the projective unitary group of $\mathcal{H}_{\Sigma_g}$. This property is satisfied%
\footnote{At $k = 4$ the image of the MCG in the projective unitary group is finite rather than dense, so arbitrary unitaries cannot be approximated to arbitrary precision. Levels $k = 1,\ 2,\ 4$ are exceptional cases where the density condition fails~\cite{Freedman2003}.} %
for $k \geq 3$, with $k \neq 4$, but fails at $k = 2$. At $k = 3$ both the branching structure and the density property are present, making it the minimal level at which a topological realization of the Toffoli operator can exist. As with the $SU(2)_1$ case introduced in Section~\ref{FusionTensorSection}, each fusion matrix $N_a$ is realized by path integration over $\eta$ with a Wilson loop in the representation $a$, preparing a tripartite state with amplitudes proportional to $N^{j_3}_{j_1, a}$ following Eq.\ \eqref{EtaState}.

\subsection{Logical Encoding and Structure of the Toffoli Manifold}\label{subsec:encoding}

The $SU(2)_3$ torus Hilbert space is four-dimensional, therefore realizing a qubit gate in this Hilbert space requires specifying how the logical qubit is encoded. One natural choice for this encoding is to use the same two representations which span $\mathcal{H}_{\mathbb{T}^2}$ in $SU(2)_1$, that is
\begin{equation}\label{LogicalEncoding}
\ket{0}_L \equiv \ket{j=0}, \qquad \ket{1}_L \equiv \ket{j=\frac{1}{2}},
\end{equation}
defining the two-dimensional logical subspace $\mathcal{H}_L \equiv \mathrm{span}\{\ket{0}, \ket{1/2}\} \subset \mathcal{H}_{\mathbb{T}^2}$. The orthogonal complement subspace $\mathcal{H}_L^\perp = \mathrm{span}\{\ket{1},\ket{3/2}\}$ then serves as the leakage subspace, and any fusion process that takes an amplitude into representations $j = 1$ or $j = 3/2$ moves the system outside this logical encoding.

The restriction of $N_{1/2}$ to the logical subspace $\mathcal{H}_L$ in Eq.\ \eqref{LogicalEncoding} is
\begin{equation}\label{N12_logical}
N_{\frac{1}{2}}\big|_{\mathcal{H}_L} = 
\begin{pmatrix} 
0 & 1 \\ 
1 & 0 
\end{pmatrix}_L = X_L,
\end{equation}
which defines the logical Pauli $X$ operator. Inside $\mathcal{H}_L$, the $\{0,\ 1/2\}$ sector of $N_{1/2}$ in $SU(2)_3$ coincides with the full fusion matrix $N_{1/2}$ of $SU(2)_1$, from Eq.\ \eqref{N12_SU21}. Accordingly, the topological construction of the logical Pauli $X$ operator follows directly from Section~\ref{FusionTensorSection}.

When two logical qubits are fused in $SU(2)_3$, the encoding of Eq.\ \eqref{LogicalEncoding} identifies
\begin{equation}\label{LogicalInput}
    \ket{1}_L\otimes\ket{1}_L = \ket{\frac{1}{2}}\otimes\ket{\frac{1}{2}}.
\end{equation}
The branching rule in Eq.\ \eqref{BranchingRule} then implies that the input in Eq.\ \eqref{LogicalInput} fuses into a superposition of $\ket{0}$, which lies in $\mathcal{H}_L$, and $\ket{1}$, which lies in $\mathcal{H}_L^\perp$. No other logical pair produces this leakage since $0\otimes 0 = 0$, $0\otimes 1/2 = 1/2$, and $1/2\otimes 0 = 1/2$ remain entirely in $\mathcal{H}_L$. The selective leakage of the $\ket{11}_L$ configuration is a topological signature of the AND condition, since the spin-$1$ intermediate channel is accessible only when both control qubits are spin-$1/2$. Our goal is then to construct a manifold whose path integral utilizes this distinction to implement the conditional $X$ operation on the target qubit.

We realize the Ising interaction gate in Section~\ref{sec:magic} by path integration over the disjoint union $\eta_1\sqcup\eta_2$, which factors into independent operations on each boundary torus. The Toffoli gate cannot be realized this way, since its action on the target qubit depends on the joint state of both control qubits, rather than on each control qubit independently. The required manifold, which we denote $\mathcal{M}_T$, must therefore be connected such that the bulk topology couples all three qubit boundaries together. Following the Choi-Jamio\l{}kowski convention of~\cite{Salton:2016qpp}, $\mathcal{M}_T$ will have six torus boundaries, two for each logical qubit with one input and one output. Path integration over $\mathcal{M}_T$ must implement the map
\begin{equation}\label{ToffoliAmplitudes}
\bra{a',b',c'}\mathcal{M}_T\ket{a,b,c}_L = \delta_{a',a}\,\delta_{b',b}\,\delta_{c',\, c\oplus ab},
\end{equation}
where $a, b, c \in \{0, 1\}_L$ serve as logical qubit labels and $\oplus$ indicates addition mod $2$. Using the $\{0,\ 1/2\}$ representation labels of the Eq.\ \eqref{LogicalEncoding} encoding, the amplitudes in Eq.\ \eqref{ToffoliAmplitudes} require that the path integral acts as $\mathbb{I}$ on the target boundary for control configurations $(a,b) \neq (1/2, 1/2)$, and implements $X_L$ when $(a,b) = (1/2, 1/2)$. This conditional structure cannot be achieved by any disjoint union of handlebodies, since disjoint unions can only implement the tensor product of independent operators. 

Following the above argument, the manifold $\mathcal{M}_T$ must therefore be connected and its topology must encode the necessary correlation between control and target boundaries. The $SU(2)_3$ fusion data implements the requisite conditional action via a two-step process. First, the two Wilson loops corresponding to the control qubits, each carrying spin-$1/2$, join at a trivalent junction in the interior of $\mathcal{M}_T$. The branching rule of Eq.\ \eqref{BranchingRule} produces contributions in both the spin-$0$ and spin-$1$ representations when the control configuration is $\ket{11}_L$, while all other control configurations fuse entirely within $\mathcal{H}_L$. The spin-$1$ contribution, which is internal to $\mathcal{M}_T$, then couples to the target boundary through the fusion matrix $N_1$. The fusion matrix entry
\begin{equation}\label{N1_entry}
(N_1)_{\frac{1}{2},\frac{1}{2}} = 1
\end{equation}
ensures that the fusion $1/2 \otimes 1$ contains a spin-$1/2$ channel, enabling amplitude to return to the logical subspace while implementing a conditional flip on the target qubit.

We summarize below the required properties of a manifold $\mathcal{M}_T$ that topologically realizes the Toffoli gate.
\begin{enumerate}
\item $\mathcal{M}_T$ is a connected, oriented $3$-manifold with six torus boundaries.
\item The interior of $\mathcal{M}_T$ contains two Wilson loops, in the spin-$1/2$ representation of $SU(2)_3$, one for each control qubit.
\item These Wilson loops meet with fusion amplitude $N^{j_3}_{1/2,\ 1/2}$, and produce contributions in the spin-$0$ and spin-$1$ representations. Spin-$1$ is an intermediate label, internal to $\mathcal{M}_T$, and is not associated to any boundary.
\item The spin-$1$ representation couples to the target boundary through a second fusion  $N^{j_3}_{1/2,\ 1}$, implementing a conditional flip on the target qubit.
\item Path integration cancels all leakage into $\mathcal{H}_L^\perp$, such that the amplitudes in Eq.\ \eqref{ToffoliAmplitudes} are realized fully within the logical subspace.
\end{enumerate}
The fifth condition above warrants further discussion. The branching $1/2 \otimes 1/2 = 0 \oplus 1$ directs amplitude into both spin-$0$ and spin-$1$ channels, and the relative phases between each channel are determined by the $SU(2)_3$ modular data. Specifically, relative phases depend on the $S$-matrix and topological spins of the theory~\cite{Witten1989}, and enter the path integral through Dehn surgery on $\mathcal{M}_T$. In order for the Toffoli gate to act correctly within $\mathcal{H}_L$, we require that the spin-$0$ channel produces the identity action on the target qubit, while the spin-$1$ channel initiates a bit flip. Moreover, all leakage components throughout the operation must cancel by destructive interference. In Section~\ref{subsec:existence}, we establish the existence of $\mathcal{M}_T$ and identify the details of this leakage cancellation as an open problem.

A natural ansatz for $\mathcal{M}_T$ is obtained by gluing two handlebodies along an intermediate torus boundary. The first handlebody $\eta_{12}$ contains the two control Wilson loops, and fuses them to produce the spin-$0$ and spin-$1$ intermediate channels via Eq.\ \eqref{BranchingRule}. The output boundary of $\eta_{12}$, which carries the intermediate representation, is then glued to the input boundary of a second handlebody $\eta_{23}$, which fuses the intermediate channel with the target qubit. Schematically, we describe $\mathcal{M}_T$ as
\begin{equation}\label{ToffoliAnsatz}
\mathcal{M}_T \sim \eta_{12} \cup_{\mathbb{T}^2} \eta_{23},
\end{equation}
where the gluing of handlebodies is performed along the shared torus boundary that carries the intermediate representation. This ansatz encodes the two-step fusion described above, with $\eta_{12}$ implementing the control action and $\eta_{23}$ the conditional bit flip on the target qubit. However, determining whether $\mathcal{M}_T$ in Eq.\ \eqref{ToffoliAnsatz} satisfies the leakage cancellation from Condition 5 requires computing the relative phases between the spin-$0$ and spin-$1$ channels. As discussed above, these relative phases depend on the surgery presentation of $\mathcal{M}_T$ in $S^3$ and modular data of $SU(2)_3$.

\subsection{Existence via Density and Open Problems}\label{subsec:existence}

In this section we establish that the Toffoli gate can be approximated to arbitrary precision by path integration in $SU(2)_k$, for suitable choice of $k$, and state the aspects of this problem which remain open. The key idea follows a density result from Kitaev et al.~\cite{Freedman2003} which states: For $SU(2)_k$ with $k \geq 3$ and $k \neq 4$, the representation of the mapping class group for any surface $\Sigma_g$ of genus $g \geq 2$ is dense in the projective unitary group of $\mathcal{H}_{\Sigma_g}$. This density result, combined with the existence of the path-integral isometry~\cite{Salton:2016qpp}
\begin{equation}\label{EmbeddingIsometry}
    V: \mathcal{H}_{\mathbb{T}^2}^{\otimes 2} \to \mathcal{H}_{\Sigma_2},
\end{equation}
implies that any unitary on $\mathcal{H}_{\mathbb{T}^2}^{\otimes 2}$ can be approximated by path integration up to overall rescaling. The case of $k = 4$ is excluded because the mapping class group at this level is finite, and thereby not dense~\cite{Freedman2003}.

For a three-qubit Toffoli gate, we require a unitary on $\mathcal{H}_{\mathbb{T}^2}^{\otimes 3}$ which embeds into $\mathcal{H}_{\Sigma_g}$ for $g \geq 3$ through an isometry analogous to Eq.\ \eqref{EmbeddingIsometry}. The density theorem of \cite{Freedman2003} applies for any genus $g \geq 2$, so the argument extends faithfully.

\begin{proposition}[Existence of Toffoli in $SU(2)_k$]\label{prop:toffoli_existence}
For an $SU(2)_k$ Chern-Simons theory with $k \geq 3$ and $k \neq 4$, with the logical encoding given by Eq.\ \eqref{LogicalEncoding}, the Toffoli gate acting on three logical qubits can be approximated to arbitrary precision by path integration over a connected, oriented $3$-manifold with Wilson loop insertions.
\end{proposition}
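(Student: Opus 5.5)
The argument is a density argument, following the strategy sketched just before the statement. First, we fix the encoding of Eq.~\eqref{LogicalEncoding}. The three-qubit logical space $\mathcal{H}_L^{\otimes 3}$ sits inside $\mathcal{H}_{\mathbb{T}^2}^{\otimes 3}$. We extend the Toffoli gate to a unitary $\widetilde{U}_{\mathrm{CCNOT}}$ on all of $\mathcal{H}_{\mathbb{T}^2}^{\otimes 3}$ by acting as the identity on the orthogonal complement of $\mathcal{H}_L^{\otimes 3}$. Leakage is then zero by construction for the target unitary; only the approximation can leak.

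Next, we embed $\mathcal{H}_{\mathbb{T}^2}^{\otimes 3}$ isometrically into the Hilbert space $\mathcal{H}_{\Sigma_3}$ of a genus-three surface, as in Eq.~\eqref{EmbeddingIsometry}. Concretely, we take three solid tori, or more invariantly a genus-three handlebody cut along separating disks. The torus labels are the colorings of the three handles in a pants-decomposition basis of $\mathcal{H}_{\Sigma_3}$, with the internal separating curves carrying the trivial label. Call this isometry $V_3$.

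We then extend $V_3\widetilde{U}_{\mathrm{CCNOT}}V_3^\dagger$ to a unitary $W$ on $\mathcal{H}_{\Sigma_3}$ by acting as the identity on the complement of $\mathrm{im}\,V_3$. By the density theorem of~\cite{Freedman2003}, valid for $k\ge 3$ and $k\ne 4$ with $g=3\ge 2, for every $\epsilon>0$ there is a mapping class $f\in\mathrm{MCG}(\Sigma_3)$ whose representative $\rho(f)$ satisfies $\|\rho(f)-e^{i\phi}W\|<\epsilon$ for some phase $\phi$.

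The last step converts $\rho(f)$ into a path integral over a connected, oriented $3$-manifold with Wilson loops. The mapping cylinder $\Sigma_3\times[0,1]$, glued via $f$ at one end, is connected and oriented, and its path integral is $\rho(f)$ up to a framing phase. We compose with $V_3$ and $V_3^\dagger$ on either end. These are themselves realized by gluing the handlebody pieces, with Wilson lines in the logical labels threading the handles. Capping with solid tori in the Choi--Jamio\l{}kowski convention of~\cite{Salton:2016qpp} then gives six torus boundaries and the amplitudes of Eq.~\eqref{ToffoliAmplitudes} up to $O(\epsilon)$ and an overall scale. Since $\rho(f)$ is close to $W$, which preserves $\mathrm{im}\,V_3$, the compression $V_3^\dagger\rho(f)V_3$ is within $\epsilon$ of $e^{i\phi}\widetilde{U}_{\mathrm{CCNOT}}$. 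Leakage out of $\mathcal{H}_L^{\otimes 3}$ is therefore $O(\epsilon)$. Alternatively, every mapping class can be written as a product of Dehn twists, and each twist is realized by $\pm 1$ surgery on a curve in the cylinder. This yields a surgery presentation in $S^3$, with Wilson loops marking the boundary tori.

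The main obstacle is the embedding step. We must check that the torus-product labels genuinely form an isometric subspace of $\mathcal{H}_{\Sigma_3}$, and that composing with $V_3$ and $V_3^\dagger$ is itself a path integral, rather than merely a linear-algebra operation. I would verify this using the gluing axiom: cutting $\Sigma_3$ along two separating curves and projecting those curves onto the vacuum label is implemented by inserting $S_{0j}$-weighted surgeries. The resulting normalization is the "overall rescaling" permitted in the statement. A secondary point is that density is only projective, so the global phase and framing anomaly must be absorbed. Both are harmless because they are overall scalars.
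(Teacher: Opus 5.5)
Your route is the paper's route step for step: extend the Toffoli gate by the identity, embed via the genus-three handlebody isometry, invoke mapping-class-group density, and realize $V_3^\dagger\rho(f)V_3$ by $-\mathcal{V}\cup_f\mathcal{V}$. Your bookkeeping is cleaner in places. You use the true orthogonal complement of $\mathcal{H}_L^{\otimes 3}$, and you extend $V_3\widetilde{U}_{\mathrm{CCNOT}}V_3^\dagger$ to a unitary $W$ before approximating.

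The genuine gap, which the paper shares, is the density step itself. For odd $k$ the $SU(2)_k$ representation of $\mathrm{MCG}(\Sigma_3)$ is \emph{not} dense in the projective unitary group of all of $\mathcal{H}_{\Sigma_3}$. At $k=3$ with the encoding $\{|0\rangle,|\tfrac12\rangle\}$, no mapping class satisfies $\|\rho(f)-e^{i\phi}W\|<\epsilon$ once $\epsilon<1-2^{-1/2}$. The reason is that $\{0,\tfrac32\}$ (with $h_{3/2}=3/4$) is a modular abelian subcategory. Hence $SU(2)_3\cong\{0,1\}\boxtimes\{0,\tfrac32\}$ is Fibonacci times a semion-type theory, with $\tfrac12=1\boxtimes\tfrac32$ by the truncated Clebsch--Gordan rule; you can confirm $S^{SU(2)_3}=S^{\mathrm{Fib}}\otimes S^{\mathrm{sem}}$ from Eq.~\eqref{SMatrix_SU2k}. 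Consequently $\mathcal{H}_{\Sigma_3}=\mathcal{H}^F_{\Sigma_3}\otimes\mathcal{H}^S_{\Sigma_3}$, $\rho(f)=\rho_F(f)\otimes\rho_S(f)$, and $V_3=V_F\otimes V_S$. Here $\mathcal{H}^S_{\Sigma_3}\cong(\mathbb{C}^2)^{\otimes 3}$ and $V_S$ is unitary. Moreover $\rho_S(f)$ conjugates the abelian Wilson-loop operators (a copy of the three-qubit Pauli group) into one another, so up to phase it lies in the finite Clifford group.

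Your logical basis is $|x\rangle_L=|x\rangle_F\otimes|x\rangle_S$ for $x\in\{0,1\}^3$. Set $t(a,b,c)=(a,b,c\oplus ab)$, $A_F=V_F^\dagger\rho_F(f)V_F$ and $A_S=V_S^\dagger\rho_S(f)V_S$. Then the relevant entry factorizes, $\langle t(x)|_L V_3^\dagger\rho(f)V_3|x\rangle_L=\langle t(x)|A_F|x\rangle\,\langle t(x)|A_S|x\rangle$, and both factors are entries of contractions. Your claimed bound would therefore force $|\langle t(x)|A_S|x\rangle|>1-\epsilon$ for every $x$. Computational-basis entries of a Clifford unitary have modulus $0$ or $2^{-m/2}$, so $A_S$ would have to be a monomial Clifford with permutation $t$. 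But a monomial Clifford permutes basis states by an affine map of $\mathbb{F}_2^3$, and $t$ is not affine.

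In effect, the parity obstruction of Section~\ref{subsec:obstruction} reappears in the abelian tensor factor, which is a conjugate copy of $SU(2)_1$. The same factorization, with $\tfrac12=\tfrac{k-1}{2}\boxtimes\tfrac k2$, defeats the argument for every odd $k$. The density results that actually exist concern irreducible sectors: Freedman--Larsen--Wang for Jones representations on irreducible fusion spaces, and Larsen--Wang for the $SO(3)$ theories at prime $r$. A repair at $k=3$ is to encode in the Fibonacci sector, e.g.\ $\{|0\rangle,|1\rangle\}$, and to take $f$ in the finite-index subgroup $\ker\rho_S$, on which $\rho_F$ remains dense. That, however, proves the statement for a different encoding than Eq.~\eqref{LogicalEncoding}. (A minor slip: $-\mathcal{V}\cup_f\mathcal{V}$ already has the six torus boundaries; capping with solid tori removes boundaries rather than producing them.)
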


\begin{proof}
The Toffoli gate acts as a unitary on $\mathcal{H}_L^{\otimes 3} \subset \mathcal{H}_{\mathbb{T}^2}^{\otimes 3}$, which can be extended to a unitary on $\mathcal{H}_{\mathbb{T}^2}^{\otimes 3}$ by acting as the identity on the logical complement subspace $(\mathcal{H}_L^\perp)^{\otimes 3}$. The isometry 
\begin{equation}
    V: \mathcal{H}_{\mathbb{T}^2}^{\otimes 3} \to \mathcal{H}_{\Sigma_3},
\end{equation}
constructed by path integration over a genus-3 handlebody with three solid tori removed from its interior, which generalizes the genus-2 construction of \cite{Salton:2016qpp}, embeds $\mathcal{H}_{\mathbb{T}^2}^{\otimes 3}$ into $\mathcal{H}_{\Sigma_3}$. Following the density theorem~\cite{Freedman2003}, the mapping class group of $\Sigma_3$ admits a dense representation in the projective unitary group of $\mathcal{H}_{\Sigma_3}$ for $k \geq 3$, $k \neq 4$. Any unitary $U$ on $\mathcal{H}_{\mathbb{T}^2}^{\otimes 3}$ can therefore be approximated~\cite{Salton:2016qpp} as
\begin{equation}
    U \approx V^\dagger W_f V, 
\end{equation}
where $W_f$ is a unitary on $\mathcal{H}_{\Sigma_3}$ induced by the orientation-preserving diffeomorphism $f$ of $\Sigma_3$. The operator $V^\dagger W_f V$ is realized~\cite{Salton:2016qpp} by path integration over the manifold $-\mathcal{V} \cup_f \mathcal{V}$. Restriction to the logical subspace $\mathcal{H}_L^{\otimes 3}$ then provides an approximation to the Toffoli gate, up to arbitrary precision.
\end{proof}

Proposition~\ref{prop:toffoli_existence} ensures that $\mathcal{M}_T$ exists, but does not offer its explicit construction. In order to construct $\mathcal{M}_T$ directly, the following two open problems must be resolved:

\textbf{Open Problem 1: Explicit surgery presentation.} One must construct the surgery link $\mathcal{L} \subset S^3$, and framing data, such that Dehn surgery on $\mathcal{L}$ results in a manifold whose path integral computes the Toffoli amplitudes in Eq.\ \eqref{ToffoliAmplitudes}. A natural approach is to decompose the Toffoli gate into a product of Dehn twists in the mapping class group of $\Sigma_3$, then reconstruct $\mathcal{M}_T$ using the surgery formula~\cite{Salton:2016qpp}
\begin{equation}\label{SurgeryFormula}
\langle W(L,R)\rangle_{\mathcal{M_T}} = \frac{\langle W(L,R)\tilde{W}(\mathcal{L})\rangle_{S^3}}{\langle\tilde{W}(\mathcal{L})\rangle_{S^3}},
\end{equation}
where $\tilde{W}(\mathcal{L})$ is a product of Wilson loop operators through curves $\{C_i\}$ defined 
\begin{equation}
    \tilde{W}(\mathcal{L}) = \prod_i \sum_\ell S_{\ell 0}\, W(\mathcal{C}_i, R_\ell).
\end{equation}

\textbf{Open Problem 2: Leakage cancellation.} One must verify that the leakage into the logical complement subspace $\mathcal{H}_L^\perp$ is canceled out in the path integral over $\mathcal{M}_T$. Confirming this feature requires computing relative phases between the spin-$0$ and spin-$1$ channels using the modular data of $SU(2)_3$. The $S$-matrix of $SU(2)_k$ is~\cite{Witten1989}
\begin{equation}\label{SMatrix_SU2k}
S_{jj'} = \sqrt{\frac{2}{k+2}}\sin\left(\frac{\pi(2j+1)(2j'+1)}{k+2}\right),
\end{equation}
and the topological spins are $\theta_j = e^{2\pi i h_j}$ with conformal dimensions 
\begin{equation}
    h_j = \frac{j(j+1)}{(k+2)}.
\end{equation}
For $k = 3$, the data determine the relative amplitudes of the spin-$0$ and spin-$1$ channels in the path integral, and whether the Toffoli amplitudes in Eq.\ \eqref{ToffoliAmplitudes} are exactly or only approximately reproduced in $\mathcal{H}_L^{\otimes 3}$.

The fusion structure described in Section~\ref{subsec:encoding} offers a guide for both open problems. The junction where the two control Wilson loops meet corresponds to a surgery curve that links both control-qubit cores with an intermediate loop. The coupling of this junction to the target qubit is another curve that links the intermediate loop with the target-qubit core. A solution to open problems 1 and 2 would determine the explicit linking matrix and surgery presentation which realize this fusion structure.

The $\mathbb{Z}_2$ algebra of $SU(2)_1$ limits which gates are topologically accessible to those conditioned on a simple parity check. The more complicated Toffoli gate requires the AND conditional, which is topologically accessible beginning at $SU(2)_3$ through the fusion rule $1/2 \otimes 1/2 = 0 \oplus 1$ in Eq.\ \eqref{BranchingRule}. While the representation density of the mapping class group in the projective unitary group of the boundary Hilbert space (Kitaev et al.~\cite{Freedman2003}), coupled with the unitary approximation via manifold to boundary isometry (Salton et al.~\cite{Salton:2016qpp}), guarantees that the Toffoli gate can be approximated by path integration in $SU(2)_3$, explicit construction of the manifold $\mathcal{M}_T$ on which the path integral is performed, and subsequent verification of leakage cancellation remain open questions. In the next section we leave the Chern-Simons framework, and instead explore Dijkgraaf-Witten theory where an exact realization of the non-Clifford T gate can be achieved topologically without approximation.

\section{The T Gate in Dijkgraaf-Witten Theory}\label{sec:DW}

In Sections~\ref{sec:magic} and~\ref{sec:su2k} we use Chern-Simons theory to topologically prepare non-Clifford gates either continuously, as with the Ising interaction gate $U(\theta)$, or approximately, as with the Toffoli gate using the mapping class group density theorem. In this section we move beyond Chern-Simons theory to demonstrate how the T gate, which serves as the canonical single-qubit non-Clifford operation, can be realized by exact path integration in Dijkgraaf-Witten theory with the finite gauge group $\mathbb{Z}_4$. We explain how the non-Clifford properties of the T gate originate from the 3-cocycle data of the theory, in contrast with our Chern-Simons constructions.

\subsection{Dijkgraaf-Witten Theory with Finite Gauge Group}\label{subsec:DW_intro}

Dijkgraaf-Witten (DW) theory~\cite{Dijkgraaf:1990} is a topological quantum field theory defined by a finite gauge group $G$ and cohomology class $[\omega] \in H^3(G, U(1))$, as defined in Section~\ref{subsec:cohomology}. Since $G$ is finite, the DW path integral reduces to a finite sum over gauge field configurations, yielding an exactly solvable theory. The gauge field configuration on a closed, oriented $3$-manifold $\mathcal{M}$ is specified by a group homomorphism
\begin{equation}\label{FlatBundle}
\phi\,\colon \pi_1(\mathcal{M}) \to G,
\end{equation}
where $\pi_1(\mathcal{M})$ is the fundamental group%
\footnote{Recall the fundamental group of a manifold denotes the group of homotopy classes of loops based at a fixed point, with group operation defined as loop concatenation.} %
of $\mathcal{M}$. Eq.\ \eqref{FlatBundle} assigns a group element to each non-contractible loop in $\mathcal{M}$, compatible with loop composition. The map $\phi$ serves an analog role to specifying a flat connection in Chern-Simons theory. Since $G$ is finite, there are finitely many such homomorphisms.

Each $\phi$ determines a $U(1)$-valued topological action $\omega(\mathcal{M}, \phi)$ by the following construction. The 3-cocycle $\omega$ assigns $U(1)$ phases to triples of elements $g \in G$ (see Section~\ref{subsec:cohomology}). The homomorphism $\phi$ maps loops in $\mathcal{M}$ to elements in $G$. Composing $\omega$ with $\phi$ then builds a 3-cocycle on $\mathcal{M}$, rather than on $G$. The composed cocycle, which we denote $f_\phi^*[\omega]$, is evaluated on $\mathcal{M}$ to produce a single $U(1)$ phase
\begin{equation}\label{DWAction}
\omega(\mathcal{M}, \phi) = \langle f_\phi^*[\omega],\, [\mathcal{M}] \rangle \in U(1),
\end{equation}
which plays the role of $e^{iS_{CS}}$ in the Chern-Simons path integral of Eq.\ \eqref{PartitionFunction}. The DW partition function $Z(\mathcal{M}, \omega)$ is then expressed as the sum of phases over all gauge configurations
\begin{equation}\label{DWPartitionFunction}
Z(\mathcal{M}, \omega) = \frac{1}{|G|}\sum_{\phi\,\colon\,\pi_1(\mathcal{M})\to G} \omega(\mathcal{M}, \phi).
\end{equation}
The prefactor $|G|^{-1}$ accounts for gauge equivalences. Since the sum in Eq.\ \eqref{DWPartitionFunction} is finite, $Z(\mathcal{M}, \omega)$ can be computed exactly and is a topological invariant of $\mathcal{M}$.

When $\mathcal{M}$ has a torus boundary $\partial\mathcal{M} = \mathbb{T}^2$, path integration over $\mathcal{M}$ defines a state in the Hilbert space $\mathcal{H}_{\mathbb{T}^2}$, directly analogous to the Chern-Simons construction of Section~\ref{TopologicalStatePrep}. Basis states $\ket{j}$ are labeled by irreducible representations of $G$. For a finite group, the Hilbert space dimension counts the number of conjugacy classes of $G$, giving $\dim\mathcal{H}_{\mathbb{T}^2} = |G|$ when $G$ is abelian%
\footnote{For an abelian group, each element forms its own conjugacy class.}%
. As with the Chern-Simons case described in Section~\ref{ModularSection}, the mapping class group $\mathrm{SL}(2,\mathbb{Z})$ acts on $\mathcal{H}_{\mathbb{T}^2}$ by the modular $S$ and $T$ matrices, which are now determined by cocycle $\omega$ instead of level $k$.

\subsection{The Generating Cocycle for $\mathbb{Z}_4$ and the T Gate}\label{subsec:cocycle}

The cohomology group $H^3(\mathbb{Z}_N, U(1)) \cong \mathbb{Z}_N$, given by Eq.\ \eqref{H3ZN} of Section~\ref{subsec:cohomology}, classifies $N$ inequivalent DW theories for gauge group $G = \mathbb{Z}_N$, each labeled by the cocycle $\omega_p = \omega_1^p$ for $p \in \{0, 1, \ldots, N-1\}$, as in Eq.\ \eqref{OmegaPowers}. An explicit representative of the generating cocycle $\omega_1$ can be written~\cite{Dijkgraaf:1990} in terms of the $\mathbb{Z}_N$ representatives $\hat{a},\ \hat{b},\ \hat{c} \in \{0, 1, \ldots, N-1\}$, of elements $a,\ b,\ c \in \mathbb{Z}_N$, as
\begin{equation}\label{GeneratingCocycle}
\omega_p(a, b, c) = \exp\left(\frac{2\pi i p}{N^2}\,\hat{a}\left(\hat{b} + \hat{c} - \widehat{b+c}\right)\right),
\end{equation}
where $\widehat{b+c}$ denotes the representative of $b + c$ in $\{0, \ldots, N-1\}$, and $(\hat{b} + \hat{c} - \widehat{b+c})$ is either $0$ or $N$ depending on whether the addition overflows. The cocycle $\omega_p(a, b, c)$ satisfies the condition $\delta\omega_p = 1$, from Eq.\ \eqref{CocycleCondition}, by construction.

We now consider the group $G = \mathbb{Z}_4$ with $p = 1$. The torus Hilbert space $\mathcal{H}_{\mathbb{T}^2}$ is four-dimensional, with basis states $\{\ket{0}, \ket{1}, \ket{2}, \ket{3}\}$ labeled by elements of $G$. For abelian DW theory with gauge group $\mathbb{Z}_N$ and cocycle $\omega_p$, the modular $S$ and $T$ matrices take the form~\cite{Dijkgraaf:1990,Propitius:1995}
\begin{equation}\label{DWmodularGeneral}
S_{jk} = \frac{1}{\sqrt{N}}\exp\left(\frac{2\pi i jk}{N}\right), \qquad T_{jj} = \exp\left(\frac{\pi i p j^2}{N}\right),
\end{equation}
where $j, k \in \{0, 1, \ldots, N-1\}$. The $S$-matrix in Eq.\ \eqref{DWmodularGeneral} is independent of $p$ and implements the discrete Fourier transform, analogous to Eq.\ \eqref{ModularS} for the Chern-Simons case. The $T$-matrix depends on the cocycle parameter $p$, which multiplies $j^2$ in the exponent. For the case $N = 4$ Eq.\ \eqref{DWmodularGeneral} becomes
\begin{equation}\label{DWSmatrix}
S_{jk} = \frac{1}{2}\exp\left(\frac{2\pi i jk}{4}\right), \qquad j, k \in \mathbb{Z}_4.
\end{equation}
Similarly, for $p = 1$ the $T$-matrix
\begin{equation}\label{DWTmatrix}
T_{jj} = \exp\left(\frac{\pi i j^2}{4}\right), \qquad j \in \mathbb{Z}_4,
\end{equation}
is diagonal with specific components given by
\begin{equation}\label{DWTentries}
T_{00} = 1, \qquad T_{11} = e^{i\pi/4}, \qquad T_{22} = e^{i\pi} = -1, \qquad T_{33} = e^{9i\pi/4} = e^{i\pi/4}.
\end{equation}
We apply the logical encoding $\ket{0}_L = \ket{j=0}$ and $\ket{1}_L = \ket{j=1}$, as in Section~\ref{subsec:encoding}, in the two-dimensional subspace $\mathcal{H}_L = \mathrm{span}\{\ket{0}, \ket{1}\} \subset \mathcal{H}_{\mathbb{T}^2}$. The restriction of the modular $T$-matrix to $\mathcal{H}_L$ becomes
\begin{equation}\label{TgateFromDW}
T\big|_{\mathcal{H}_L} = \begin{pmatrix} 1 & 0 \\ 0 & e^{i\pi/4} \end{pmatrix}_L = T_{\mathrm{gate}},
\end{equation}
which is exactly the T gate without approximation. The T gate lives in the third level of the Clifford hierarchy, and is the canonical resource used for promoting Clifford operations to a universal gate set~\cite{Bravyi2004}.

\subsection{Path Integral Realization of T Gate}\label{subsec:DW_pathintegral}

The T gate in Eq.\ \eqref{TgateFromDW} admits a direct path integral interpretation. Analogous to the Chern-Simons case in Section~\ref{ModularSection}, the modular $T$ transformation is implemented geometrically by a Dehn twist on the boundary torus. In DW theory, path integration over a solid torus $\mathcal{T}$, with a Dehn-twisted boundary identification, computes the amplitude
\begin{equation}\label{DWTpathintegral}
\langle j' | \mathcal{T}_{\mathrm{twist}} | j \rangle = T_{jj}\,\delta_{j,j'} = \exp\left(\frac{\pi i j^2}{4}\right)\delta_{j,j'}.
\end{equation}
We compute the amplitude in Eq.\ \eqref{DWTpathintegral} using the DW partition function of Eq.\ \eqref{DWPartitionFunction}, which sums over $\phi : \pi_1(\mathcal{T}) \to \mathbb{Z}_4$. The boundary state $\ket{j}$ specifies that the element assigned to the non-contractible cycle of the boundary torus is $j \in \mathbb{Z}_4$. For a solid torus $\mathcal{T} \cong S^1 \times D^2$, the fundamental group is $\pi_1(\mathcal{T}) \cong \mathbb{Z}$. Fixing $j$ therefore uniquely determines the $\phi$, leaving one configuration in the sum of Eq.\ \eqref{DWPartitionFunction}. The path integral then reduces to evaluating the DW action from cocycle $\omega_1$ on this configuration, which gives the phase
\begin{equation}
\omega(\mathcal{T}, \phi) = \exp\left(\frac{\pi i j^2}{4}\right).
\end{equation}
This phase matches $T_{jj}$ in Eq.\ \eqref{DWTmatrix}, demonstrating that path integration over the solid torus, with Dehn twist, implements the modular $T$ operator. Restricting to the logical subspace $j, j' \in \{0,1\}$ then exactly reproduces the matrix elements of the T gate, as in Eq.\ \eqref{TgateFromDW}.

We highlight the fact that the path integral representation of the T gate contains a single term. In contrast with the Ising interaction gate of Section~\ref{sec:magic}, which involved a superposition of two topological sectors, i.e. the identity and the $\eta_1 \sqcup \eta_2$ contribution, the DW realization of the T gate requires no superposition or additional parameter tuning. The gate is realized by a single Dehn twist, with the non-Clifford phase arising entirely from the structure of 3-cocycle $\omega_1$.

\subsection{Comparison of Dijkgraaf-Witten and Chern-Simons Constructions}\label{subsec:DW_vs_CS}

We now compare the Dijkgraaf-Witten realization of the magic-generating T gate, with the Chern-Simons constructions of the non-local magic generating Ising and Toffoli gates in Sections~\ref{sec:magic} and~\ref{sec:su2k}. In both frameworks, gates are prepared by path integration over $3$-manifolds with torus boundaries, where the modular $S$ and $T$ matrices serve as primary building blocks. The solid torus $\mathcal{T}$, with a Dehn twist on the boundary, implements the modular $T$ transformation in both cases. However, importantly, the level of Clifford hierarchy accessed by the modular $T$ operation is different between the two theories.

In $SU(2)_1$ Chern-Simons theory, the modular $T$ matrix in Eq.\ \eqref{ModularT_SU21} acts as a diagonal phase gate at the second level of the Clifford hierarchy, gates which normalize the Pauli group. When combined with a Pauli $Z$ operation, as in Eq.\ \eqref{PhaseGate}, the modular $T$ transformation is used to construct the Clifford phase gate $P$. In this theory, therefore, the Dehn twist generates no magic. 

In DW theory with $G = \mathbb{Z}_4$ and cocycle $\omega_1$, the same geometric transformation produces a diagonal gate at the third level of the Clifford hierarchy, as in Eq.\ \eqref{TgateFromDW}. In this setting, the Dehn twist generates magic. This shift up the Clifford hierarchy reflects the different algebraic data that control the topological phases in the two theories. In Chern-Simons theory, phases in the modular $T$ matrix come from conformal dimensions $h_j = j(j+1)/(k+2)$, which yield quadratic phases in $j$. In contrast, in DW theory with cocycle $\omega_1$, the phases $\exp(\pi i j^2/4)$ also depend quadratically on $j$, but are multiplied by a different coefficient. The factor $\pi/4$, rather than a multiple of $\pi/2$, places the gate constructed by path integration in the third level of the Clifford hierarchy, rather than the second.

Another important distinction concerns the nature of the gauge group. Chern-Simons theory with a compact Lie group produces a continuously parametrized family of states and gates, exemplified by the single-parameter Ising gate $U(\theta)$ in Section~\ref{sec:magic}. DW theory with a finite gauge group produces a discrete set of gates, determined by the finite number of cocycle classes $[\omega] \in H^3(G, U(1))$. The continuous interpolation between Clifford and maximal magic gates provided by $U(\theta)$ has no analogue in the DW setting, where the allowed gates are fixed once the group $G$ and cocycle class $[\omega]$ are specified.

Finally, we emphasize that the logical encoding $\ket{0}_L = \ket{j=0}$, $\ket{1}_L = \ket{j=1}$ used to construct the T gate in the DW theory (Section~\ref{subsec:cocycle}) parallels the logical encoding $\ket{0}_L = \ket{j=0}$, $\ket{1}_L = \ket{j=1/2}$ used for the Toffoli gate construction in $SU(2)_3$ (Section~\ref{subsec:encoding}). In both cases the logical qubit is embedded in the lowest two representations of the theory. It remains an open question whether there exists a unified topological framework that can realize construction of the Ising gate, Toffoli gate, and T gate.

In this section we demonstrated how the T gate can be realized exactly by path integration over a solid torus with a single Dehn twist on its boundary in Dijkgraaf-Witten theory with $G=\mathbb{Z}_4$ and generating cocycle $\omega_1$. We explained how the resulting non-Clifford phase originates directly from the underlying 3-cocycle data. Combining these results with the Chern-Simons constructions of Sections~\ref{sec:magic} and~\ref{sec:su2k}, we established that topological path integrals produce non-Clifford gates across multiple levels of the Clifford hierarchy and within distinct classes of topological quantum field theories. We now conclude the paper with a discussion of result implications and directions for future research.

\section{Discussion}\label{sec:discussion}

In this paper we construct non-Clifford gates as path integrals in topological quantum field theories and characterize their magic content. In $SU(2)_1$ Chern-Simons theory, the two-qubit Ising interaction gate $U(\theta) = \exp(-i\theta/2 (X_1 \otimes X_2))$ is built by exponentiating the Pauli string generator $X_1 \otimes X_2$, which is prepared by path integration over the disjoint union of three-boundary handlebodies. We establish that this gate produces non-local magic for all values of $\theta$ away from a select few Clifford points, and compute its non-stabilizing power to be $m_p = 1/5(\sin^2(2\theta))$. We further show that construction of the $3$-qubit Toffoli gate in $SU(2)_1$ is obstructed by the $\mathbb{Z}_2$ fusion rules, which are only able to distinguish parity and cannot implement the AND conditional required by the Toffoli gate. Extending to $SU(2)_3$, the branching fusion rule $1/2 \otimes 1/2 = 0 \oplus 1$ resolves this obstruction, and enables us to establish the existence of a connected manifold whose path integral approximates the Toffoli gate to arbitrary precision. 

We then move beyond a Chern-Simons framework and demonstrate that the T gate, the canonical single-qubit magic-generating gate at the third level of the Clifford hierarchy, is realized exactly by a single Dehn twist in Dijkgraaf-Witten theory with finite gauge group $\mathbb{Z}_4$ and generating 3-cocycle $\omega_1 \in H^3(\mathbb{Z}_4, U(1))$, where the non-Clifford phase is described directly from the cocycle data. We compare our constructions across the two frameworks, highlighting how both the Toffoli and T gates rely on similar logical encodings that embed the qubit in the lowest two representations of the theory. A notable contrast emerges from the Dehn twist, which implements the same geometric operation in both theories, but produces a Clifford gate in Chern-Simons theory and a non-Clifford gate in Dijkgraaf-Witten theory, reflecting how different algebraic data determine the topological phases.

These results extend the theory of topological quantum resources initiated in \cite{Salton:2016qpp}, where stabilizer states are prepared by path integration in abelian $U(1)$ Chern-Simons theory, and continued in \cite{Munizzi:2025suf}, where Clifford orbits of $n$-qudit Dicke states are realized topologically in $SU(2)_1$. Related work~\cite{Balasubramanian:2025kaf,Cummings:2025zfe} has also established the topological features of entanglement in Chern-Simons theory and fibered link states. Here we move further beyond the Clifford regime, establishing that topological path integrals can produce non-Clifford gates with quantifiable magic and non-local magic content. Our construction complements that of~\cite{Fliss:2021}, where the magic of topologically prepared knot and link states is quantified in $SU(2)_k$ using the mana monotone at the state level. Conversely, we build our program at the level of unitary operators, constructing magic gates and characterizing their ability to generate non-local, non-stabilizing action.

We discover an important structural distinction between the topological constructions of non-stabilizer operators in Chern-Simons and Dijkgraaf-Witten theories. In Chern-Simons theory, the modular $T$ transformation only produces a Clifford gate (the phase gate), and the non-Clifford properties of the Ising gate $U(\theta)$ arise instead from the continuous parameter $\theta$, which varies between a fixed set of Clifford points. In Dijkgraaf-Witten theory, the modular $T$ action is manifestly non-Clifford, with the magic encoded directly in the 3-cocycle $\omega_1$. This observation suggests that the cohomological data of the topological action constrains the level of the Clifford hierarchy accessible to the theory, an observation that would be interesting to make precise in the language of the cochain-Clifford hierarchy correspondence studied~\cite{CuiGottesmanKrishna2017} on $\mathbb{Z}_2^n$.

The relationship between the algebraic magic constructed in this work and the braiding magic accessible in higher-level theories remains to be studied. For $SU(2)_k$ with $k \geq 3$ and $k \neq 4$, non-abelian braiding enables universal quantum computation~\cite{Freedman2003,Nayak2008}, and the associated braid gates are topologically protected against local perturbations. The magic gates we construct in $SU(2)_1$ do not share this topological protection since they arise from fusion tensor contractions and manifold products rather than anyon braiding. It is unclear whether our topological construction over the disjoint union of manifolds admits a natural anyonic interpretation. Furthermore, how the magic content of unitaries prepared by path integration compares with braiding-based magic at higher levels is an open question, with implications for realizing magic resources in topological quantum computing.

A related open question concerns the topological invariance of magic measures themselves. The non-stabilizing power $m_p = 1/5\sin^2(2\theta)$ depends on the coupling parameter $\theta$ rather than any topological invariant of the underlying manifold. Whether or not a genuinely topological magic measure exists, stable under homeomorphisms and solely dependent on the topology of the manifold and Wilson loop insertion, remains unknown. If such a measure were to exist, it would provide an operator-level analog to the mana computed for knot and link states in~\cite{Fliss:2021}.

In~\cite{Beverland2016} Beverland et al. showed that locality-preserving logical gates in topological codes must lie within a finite level of the generalized Clifford hierarchy, where the precise level is determined by the anyons. Our Dijkgraaf-Witten construction provides a concrete example where topological data, the 3-cocycle, directly determines a gate at the third-level of the Clifford hierarchy, consistent with this general framework. It would be interesting to understand how the open problems identified in our $SU(2)_3$ Toffoli construction, specifically the leakage cancellation and explicit surgery presentation, interact with the constraints derived in~\cite{Beverland2016} for non-abelian models.

Finally, the logical encodings used across our three constructions, $\ket{0}_L = \ket{j=0}$ and $\ket{1}_L = \ket{j=1/2}$ in $SU(2)_k$, and $\ket{0}_L = \ket{j=0}$ and $\ket{1}_L = \ket{j=1}$ in $\mathbb{Z}_4$ Dijkgraaf-Witten theory, embed the logical qubit in the lowest two representations. Whether a single topological framework exists that can realize the Ising gate, the Toffoli gate, and the T gate within a unified encoding, or whether each non-Clifford operation necessarily requires a different topological theory, remains an open problem. A resolution to this question would clarify the extent to which the full Clifford hierarchy can be realized topologically.

\section*{Acknowledgments}

The authors thank Brian Swingle and Anatoly Dymarsky for helpful conversations. W.M. is supported by the Department of Energy (DOE) Office of Science (SC) Grant No DOE DE-FOA-0003432 and by Grant No GBMF12976 of the Gordon and Betty Moore Foundation.

\bibliographystyle{JHEP}

\bibliography{TopoNLMagic}

\end{document}